\def\draft{1}
\newcommand{\YSHI}[1]{\ifthenelse{\equal{\draft}{1}}{{\color{red}{#1}}}{#1}}
\newtheorem{theorem}{Theorem}
\newtheorem{proposition}[theorem]{Proposition}
\newtheorem{definition}[theorem]{Definition}
\newtheorem{lemma}[theorem]{Lemma}
\newtheorem{corollary}[theorem]{Corollary}
\newcommand{\commentout}[1]{}
\def\Tr{\textnormal{Tr}}
\def\Supp{\textnormal{Supp }}
\def\Dist{\textnormal{Dist}}
\begin{document}

\title{Randomness in nonlocal games between mistrustful players}

\author{Carl A.~Miller}
\email{camiller@umd.edu}
\affiliation{National Institute of Standards and Technology, 100 Bureau Dr., Gaithersburg, MD  20890, USA, \\
Joint Center for Quantum Information and Computer Science, University of Maryland, College Park, MD  20742, USA}

\author{Yaoyun Shi}
\email{shiyy@umich.edu}
\affiliation{Dept.~of Electrical Engineering and Computer Science, University of Michigan, Ann Arbor, MI  48109, USA}

\date{\today}

\begin{abstract}
\noindent
If two quantum players at a nonlocal game $G$ achieve a superclassical score, then their measurement outcomes
must be at least partially random from the perspective of any third player.  This is the basis
for device-independent quantum cryptography.  In this paper we address a related question: does a superclassical
score at $G$ guarantee that one player has created randomness from the perspective of the other player?  We show
that for complete-support games, the answer is yes: even if the second player is given the first player's input at the
conclusion of the game, he cannot perfectly recover her output.  Thus some amount of \textit{local} randomness (i.e., randomness
possessed by only one player) is always obtained when randomness is certified from nonlocal games with quantum strategies. 
This is in contrast to non-signaling game strategies, which may produce global randomness without any local randomness.
We discuss potential implications for cryptographic protocols between mistrustful parties.
\end{abstract}


\maketitle

\section{Introduction}

When two quantum parties Alice and Bob play a nonlocal game $G$ and achieve
a score that exceeds the best classical score $\omega_c ( G )$, their outputs must be at least partially random.
In other words, all Bell inequality violations certify the existence of randomness.  
This fact is at the center of protocols for device-independent
quantum cryptography, where untrusted devices are used to perform cryptographic procedures.
In particular, this notion of certification is the basis for device-independent \textit{randomness expansion}, where a small random seed is converted into a much larger uniformly random output by repeating Bell violations  \cite{col:2006, pam:2010, ColbeckK:2011,
Vazirani:dice, pironio13, fehr13, CoudronVY:2013, CY:STOC, Miller:2016,
Universal-spot, Dupuis:2016, Arnon:2016}.

A natural question arises:
is new randomness also generated by one player from the perspective of the other player?  Specifically, 
if $X$ denotes Alice's outputs, $Z$ denotes the post-measurement state that Bob has at the conclusion of the game, and $F$ denotes all side information (including
Alice's input), is there a certified lower bound for the conditional
entropy $H ( X \mid Z F )$?  Besides helping us understand the nature of certified randomness, this particular
kind of randomness (local randomness) has applications in mutually mistrustful cryptographic settings, where
Alice and Bob are cooperating but have different interests.

Quantifying local randomness (i.e., randomness that is only known to one player) is challenging because many of the known tools do not apply.
Lower bounds for the total randomness (i.e, randomness from the perspective of
an outside adversary) have been computed as
a function of the degree of the Bell violation (see Figure 2 in \cite{pam:2010}) but 
are not directly useful for certifying local randomness.  One of the central challenges
is that we are measuring randomness from the perspective of an active, rather than passive, adversary:
Bob's guess at Alice's output occurs after Bob has carried out his part of the strategy for $G$.  
Current tools for device-independent randomness expansion are not designed to address the case
where the adversary is a participant in the nonlocal game.

Does the generation of certified randomness always involve the generation
of \textit{local} certified randomness?  The answer
is not obvious: for example, in the non-signaling setting, Alice and Bob could share a PR-box\footnote{That is,
the unique $2$-part non-signaling resource whose input bits $a,b$ and output bits $x,y$ always
satisfy $x \oplus y = a \wedge b$} which generates $1$ bit of certified randomness per use, but
no new local randomness -- Bob could perfectly guess Alice's output from his own if he were given
Alice's input. 

Motivated by the above, we prove the following result in this paper (see Theorem~\ref{robustthm} for a formal
statement).
\begin{theorem}[Informal]
\label{robustthminformal}
For any complete-support game\footnote{That is, a game in which each
input pair occurs with nonzero probability.} $G$, there is a constant $C_G > 0$ such that the following holds.
Suppose Alice and Bob use a strategy for $G$ which achieves a score that is $\delta$ above the best classical
score (with $\delta > 0$).  Then, at the conclusion of the strategy and given Alice's input, Bob can guess
her output with probability at most $(1- \delta^2/C_G)$.
\end{theorem}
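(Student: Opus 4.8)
The plan is to prove the contrapositive: if at the conclusion of the game, and given Alice's input $s$, Bob can guess her output with probability at least $1-\epsilon$, then the strategy's score exceeds $\omega_c(G)$ by at most $\sqrt{C_G\,\epsilon}$; rearranging yields the stated bound (the square appears for the reason noted below). I work with the assemblage $\sigma_{x|s}:=\Tr_A[(A_s^x\otimes I)|\psi\rangle\langle\psi|]$ produced by Alice's measurements on the shared state $|\psi\rangle$; it satisfies $\sum_x\sigma_{x|s}=\rho_B$ for every $s$, and after restricting to $\Supp\rho_B$ we may assume $\rho_B$ is full rank. Two reductions come first. First, without loss of generality Bob defers all of his measurements to the end: this leaves the joint distribution of inputs and outputs (hence the score) unchanged and can only help him guess. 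Second, Bob's behaviour is then described by POVMs $\{C_{s,t}^{y,x}\}$ on $\mathcal H_B$ — game output $y$, guess $x$ — subject to $\sum_x C_{s,t}^{y,x}=B_t^y$ independent of $s$ (reflecting that the game output cannot depend on $s$), where $\{B_t^y\}$ is Bob's game POVM on input $t$. The guessing probability is then $\sum_{s,t}p(s,t)\sum_{x,y}\Tr[C_{s,t}^{y,x}\sigma_{x|s}]$.

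Take the exact case $\epsilon=0$ first. Each summand above is at most $p(s,t)$, and the only way the total can reach $1$ is for every cross-term to vanish: $\Tr[C_{s,t}^{y,x}\sigma_{x'|s}]=0$ whenever $x\neq x'$, for every input pair of positive probability — and here complete support guarantees this for every pair $(s,t)$. Thus $\{\sigma_{x|s}\}_x$ is a perfectly distinguishable ensemble summing to $\rho_B$, so its members have pairwise orthogonal supports; letting $P_{x|s}$ be the projection onto $\Supp\sigma_{x|s}$ gives a PVM with $\sigma_{x|s}=P_{x|s}\rho_B P_{x|s}$ and $[\rho_B,P_{x|s}]=0$. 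The same vanishing, together with $\rho_B$ full rank, yields $C_{s,t}^{y,x}P_{x'|s}=0$ for $x\neq x'$, hence $C_{s,t}^{y,x}=P_{x|s}C_{s,t}^{y,x}P_{x|s}$ and so $B_t^y=\sum_x P_{x|s}C_{s,t}^{y,x}P_{x|s}$ is block-diagonal with respect to every PVM $\{P_{x|s}\}_x$, i.e.\ $[B_t^y,P_{x|s}]=0$ for all $s,t,x,y$ (complete support again). Let $\mathcal P$ be the von Neumann algebra generated by the $P_{x|s}$; then $\rho_B$ and every $B_t^y$ lie in $\mathcal P'$. Decomposing $\mathcal H_B\cong\bigoplus_k \mathbb C^{n_k}\otimes\mathbb C^{m_k}$ along the commuting pair $(\mathcal P,\mathcal P')$ and computing block by block gives $p(x,y|s,t)=\sum_k \Tr[(P_{x|s})_k]\,\Tr[(B_t^y)_k\rho_k]$ with $\rho_B=\bigoplus_k I_{n_k}\otimes\rho_k$. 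Reading off $\pi(k):=n_k\Tr\rho_k$, $p_A(x|s,k):=\Tr[(P_{x|s})_k]/n_k$, $p_B(y|t,k):=\Tr[(B_t^y)_k\rho_k]/\Tr\rho_k$ exhibits the correlation as a local hidden-variable model with hidden variable $k$; local correlations score at most $\omega_c(G)$, so $\delta\le 0$.

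The robust case runs the same chain quantitatively. Guessing probability $1-\epsilon$ together with complete support gives a per-input guessing error $\epsilon_{s,t}\le\epsilon/p_{\min}$ with $p_{\min}=\min_{s,t}p(s,t)>0$. Near-perfect distinguishability of $\{\sigma_{x|s}\}_x$ then produces projections $P_{x|s}$ (say onto the dominant spectral subspaces of the optimal distinguishing operators, after a Naimark step) with $\|\sigma_{x|s}-P_{x|s}\rho_B P_{x|s}\|_1$, $\|[\rho_B,P_{x|s}]\|$ and $\|[B_t^y,P_{x|s}]\|$ all $O(\sqrt{\epsilon/p_{\min}})$ — the square root is the usual cost of converting a distinguishing-error bound into an operator/overlap bound, and is the source of the $\delta^2$ in the statement. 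I would then replace each $B_t^y$ and $\rho_B$ by their images under the conditional expectation onto $\mathcal P'$ (at a cost controlled by those commutator norms), rerun the exact computation on the corrected operators to obtain an explicitly local correlation $q$, and bound $\|p-q\|_1\le C_G^{1/2}\sqrt\epsilon$ term by term, $C_G$ depending only on the numbers of inputs and outputs of $G$ and on $1/p_{\min}$. Since the score is a fixed bounded linear functional of the correlation, $\mathrm{score}\le\omega_c(G)+C_G^{1/2}\sqrt\epsilon$, i.e.\ $\delta^2\le C_G\,\epsilon$.

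The main obstacle is the robust step, precisely because the block decomposition of the exact argument has no continuity: one cannot perturb the decomposition, so one must instead work at the level of individual operators (approximate commutation with the $P_{x|s}$ implies closeness to an operator that exactly commutes with the algebra they generate, e.g.\ by averaging), and then re-derive the local model from the corrected operators while charging each $O(\sqrt\epsilon)$ error to an $\epsilon$-independent, game-dependent coefficient so the final $C_G$ is genuinely a constant. Two smaller points also need care: checking that deferring Bob's measurements is truly without loss of generality while preserving the constraint $\sum_x C_{s,t}^{y,x}=B_t^y$ (which is what forbids Bob from using $s$ during the game), and disposing of infinite-dimensional technicalities in the commuting-pair decomposition, most cleanly by reducing to a finite-dimensional model since only finitely many operators and finitely many linear and positivity constraints are in play.
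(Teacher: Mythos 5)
Your exact case ($\epsilon=0$) is essentially the paper's Theorem~\ref{thm:main}: induced projections on Bob's side, commutation with Bob's game measurements via full support of $\rho_B$, and a von Neumann algebra block decomposition (Tsirelson's lemma); reading off an explicit local hidden-variable model from the blocks rather than tracing out a subsystem and invoking separability is a cosmetic difference. The robust case, however, is where all the content of the theorem lies, and there your plan has a genuine gap. You propose to construct projections $P_{x|s}$ that only \emph{approximately} commute with $\rho_B$ and with each $B_t^y$ (error $O(\sqrt{\epsilon/p_{\min}})$), and then to ``replace each $B_t^y$ and $\rho_B$ by their images under the conditional expectation onto $\mathcal{P}'$,'' where $\mathcal{P}$ is the algebra generated by all the $P_{x|s}$. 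There is no dimension-independent bound for that step: approximate commutation with each generator controls commutators with words in the generators only up to a factor of the word length, and the algebra generated by $|\mathcal{A}|$ mutually non-commuting PVMs can require words whose length grows with $\dim E$. Equivalently, ``almost commuting with the generators'' does not imply ``close to the commutant of the generated algebra'' with constants depending only on the game. Since the theorem must hold for strategies of arbitrary dimension and $C_G$ must depend only on $G$, the step as described does not go through --- and you flag it yourself as ``the main obstacle'' without resolving it. (A smaller issue: you assert the commutator bounds $\|[\rho_B,P_{x|s}]\|, \|[B_t^y,P_{x|s}]\| = O(\sqrt{\epsilon/p_{\min}})$ for projections obtained ``after a Naimark step'' without proof; this also needs an argument.)

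The paper sidesteps operator perturbation entirely by working on Alice's side. Its key lemma (Proposition~\ref{disturbprop}) states that if a projective measurement $\{F_i\}$ on $A$ is guessable from $B$ with probability $1-\delta$, then the pinching $X \mapsto \sum_i F_i X F_i$ moves the marginal $\Lambda^A$ by at most $2\sqrt{\delta}+\delta$ in trace norm (proved by a short block-matrix and Cauchy--Schwarz computation), and Corollary~\ref{classicalinfocor} upgrades this so the pinching also preserves the classical correlation with Bob's already-measured output register $W_b$. Alice then sequentially copies out her outcomes for every input $a \in \mathcal{A}$; the disturbance errors accumulate \emph{linearly} in $|\mathcal{A}|$ (weighted by $q$, which is where complete support and the constant $C_G$ enter), and the resulting state is classical on Alice's side, hence separable, hence yields a classical correlation within $O(C_G\sqrt{\epsilon})$ of the actual one. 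To repair your proof, replace the conditional-expectation step with this ``highly predictable measurements barely disturb the state, so Alice can copy out all her outputs in advance'' argument.
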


We note that similar problems have been studied in the literature in settings different from ours.
There has been other work examining the scenario where a third party
tries to guess Alice's output after a game
(e.g., \cite{Pawlowski:2010}, \cite{Kempe:2011b}, \cite{Acin:PRA:2016}), and single-round games
have appeared
where Bob is sometimes given
only Alice's input, and asked to produce her output
(e.g., \cite{Mancinska:2014}, \cite{Vazirani:QKD:PRL}, \cite{Vidick:2013}).  (We believe the novelty of our scenario
in comparison to these papers is that we consider the randomness of Alice's output
\textit{after} Bob has performed his part of a quantum strategy, and thus has potentially lost information due to measurement.)
Two recent papers
also address randomness between multiple players, under assumptions about imperfect storage
\cite{Kaniewski:2016, Rib:2016}.

In addition to the above, we prove a structural theorem for quantum strategies
that allow perfect guessing by Bob.  Not only do such strategies
not achieve Bell inequalities, but they are also {\em essentially classical} in the following sense.
Let $D, E$ denote the quantum systems possessed by Alice and Bob, respectively
\begin{theorem}[Informal]
\label{informalthm}
Suppose that Alice's and Bob's strategy is such that if the game $G$ is played and then Bob is
given Alice's input, he can perfectly guess her output.  Then,
there is an isometry mapping Bob's system to $E_1 \otimes E_2$ such that Bob's strategy for $G$ involves only $E_1$, and all of Alice's observables commute with the reduced state on $DE_1$.
\end{theorem}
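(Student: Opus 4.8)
The plan is to purify the strategy, use perfect guessing to extract a projective measurement on Bob's system that is ``steered'' by Alice's, show that Bob's game operation must commute with this measurement, and then read off the claimed tensor factorization. First I would make the usual reductions: purify the shared state and hand the purification to Bob, so they share a pure state $|\psi\rangle_{DE}$; restrict $E$ to the support of $\rho_E := \Tr_D|\psi\rangle\langle\psi|$, so $\rho_E$ is invertible; and Naimark-dilate on Alice's side so her measurements $\{A^a_x\}$ are projective (this leaves $\rho_E$ untouched). I would model Bob's strategy for input $b$ by an instrument with Kraus operators $\{K^b_y\}$, set $B^b_y := (K^b_y)^\dagger K^b_y$, and model his guessing step for $(a,b,y)$ by a POVM $\{N^{a,b,y}_{x'}\}$; as in Theorem~\ref{robustthminformal} I would assume each input pair of $G$ occurs, so that every $a$ is paired with every $b$.

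The heart is two steps. Step 1: a channel --- in particular Bob's whole play-then-guess procedure --- cannot perfectly distinguish non-orthogonal states, so perfect guessing forces, for each $a$, the steered states $\sigma^a_x \propto \Tr_D[(A^a_x\otimes I)|\psi\rangle\langle\psi|(A^a_x\otimes I)]$ to have pairwise orthogonal supports; letting $P^a_x$ project onto the support of $\sigma^a_x$ and using $\sum_x p(x|a)\sigma^a_x = \rho_E$, the $\{P^a_x\}_x$ form a PVM, and a one-line Cauchy--Schwarz argument ($\|(A^a_x\otimes I)|\psi\rangle\|^2$, $\|(I\otimes P^a_x)|\psi\rangle\|^2$, and $\langle\psi|A^a_x\otimes P^a_x|\psi\rangle$ all equal $p(x|a)$) gives $(A^a_x\otimes I)|\psi\rangle = (I\otimes P^a_x)|\psi\rangle$. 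Step 2: bundling Bob's instrument and guess into one POVM $\hat G^{a,b}_{x'} := \sum_y (K^b_y)^\dagger N^{a,b,y}_{x'} K^b_y$ on $E$, perfect guessing reads $\langle\psi|A^a_x\otimes\hat G^{a,b}_{x'}|\psi\rangle = \delta_{x,x'}p(x|a)$, so the same Cauchy--Schwarz argument, combined with Step 1 and invertibility of $\rho_E$, forces $\hat G^{a,b}_x = P^a_x$ --- the PVM $\{P^a_x\}$ factors through Bob's instrument. Expanding $0 = P^a_x P^a_{x'}P^a_x$ ($x\ne x'$) into a sum of positive operators then forces $N^{a,b,y}_{x'}K^b_y P^a_x = 0$ for $x\ne x'$, hence $K^b_y P^a_x = N^{a,b,y}_x K^b_y P^a_x$, hence $P^a_{x''}B^b_y P^a_x = 0$ for $x''\ne x$: every $B^b_y$ commutes with every $P^a_x$.

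To finish, let $\mathcal P$ be the von Neumann algebra generated by all $P^a_x$, so $E \cong \bigoplus_j H_j\otimes K_j$ with $\mathcal P = \bigoplus_j B(H_j)\otimes I_{K_j}$; Step 2 puts every $B^b_y$ in $\mathcal P' = \bigoplus_j I_{H_j}\otimes B(K_j)$, and by polar decomposition --- absorbing the partial-isometry factors into Bob's post-measurement register, which changes neither the game statistics nor his guessing power --- I may take $K^b_y = (B^b_y)^{1/2}$, so Bob's whole strategy is block-diagonal in this decomposition. Setting $E_1 := \bigoplus_j K_j$ and $E_2 := \bigoplus_j H_j$ and letting $W : E\to E_1\otimes E_2$ carry $H_j\otimes K_j$ onto the $(j,j)$-block $K_j\otimes H_j$, Bob's strategy becomes operators of the form (operator on $E_1$)$\otimes I_{E_2}$, so it involves only $E_1$, while each $P^a_x$ becomes $I_{E_1}\otimes\tilde P^a_x$ and Step 1 reads $(A^a_x\otimes I_{E_1E_2})|\tilde\psi\rangle = (I_{DE_1}\otimes\tilde P^a_x)|\tilde\psi\rangle$ with $|\tilde\psi\rangle := (I_D\otimes W)|\psi\rangle$. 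Multiplying by $\langle\tilde\psi|$ and tracing out $E_2$, and likewise for the adjoint, and using that $\Tr_{E_2}$ is a bimodule map over operators $I\otimes(\cdot)$ on $E_2$ (so the $\tilde P^a_x$ factors cancel), the two sides become $(A^a_x\otimes I_{E_1})\rho_{DE_1}$ and $\rho_{DE_1}(A^a_x\otimes I_{E_1})$ and hence are equal: all of Alice's observables commute with the reduced state on $DE_1$.

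I expect the main obstacle to be Step 2 together with the passage to a single tensor factorization: deriving $[B^b_y,P^a_x]=0$ is exactly where the hypothesis ``Bob plays the game \emph{and then} can still guess'' is used in an essential way, and organizing the possibly-many-block algebra $\mathcal P$ into one pair of tensor factors $E_1\otimes E_2$ while keeping all the reductions (purification, support restriction, Naimark, polar decomposition) mutually consistent is the real technical content; the remaining steps are standard perfect-correlation manipulations and bookkeeping.
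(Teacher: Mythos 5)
Your proposal is correct and follows essentially the same route as the paper's proof of Theorem~\ref{thm:main}: extract the PVM $\{P^a_x\}_x$ from the orthogonal steered states, show it commutes with Bob's game operators because guessing must still succeed after his instrument acts, invoke the commuting-operators-factorize lemma (Lemma~\ref{commutelemma}) to split $E$ into $E_1\otimes E_2$, and trace out $E_2$ to get commutation of Alice's observables with the reduced state. The only spot needing a touch more care is the inference from $N^{a,b,y}_{x'}K^b_yP^a_x=0$ to $P^a_{x''}B^b_yP^a_x=0$, which requires the guessing POVM to be projective (a harmless WLOG, since perfectly distinguishable states are always distinguished by a projective measurement) or, more directly, just reading off the orthogonality of the supports of the post-instrument states $K^b_y\sigma^a_x(K^b_y)^\dagger$ as the paper does.
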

(See Theorem~\ref{thm:main} and Corollary~\ref{maincor} for a formal statement.)
Thus, in the case of perfect guessing, the strategy is equivalent to one in which Alice's measurements have no effect
on the shared state.

\subsection{Structure of the paper}

We begin with the case of perfect guessing.  We formalize the concept of
an essentially classical strategy, using a definition of equivalence between
strategies which is similar to definitions used in results on quantum rigidity.
We then give the proof of Theorem~\ref{informalthm}.  It is known that two sets of mutually commuting measurements on
a finite-dimensional space can be expressed
as the pullback of bipartite measurements. This fact is used along with matrix algebra
arguments to show the necessary splitting
of Bob's system into $E_1 \otimes E_2$.

Then we proceed 
with the proof of Theorem~\ref{robustthminformal}.  
It has been observed by previous work (e.g., \cite{Mancinska:2014}, \cite{vidickthesis}) that if a measurement $\{ P_i \}$
on a system $D$ from bipartite state $\rho_{DE}$ are highly predictable via measurements on $E$, then
the measurement does not disturb the reduced state by much: $\sum_i P_i \rho_D P_i \sim \rho_D$.  In this paper we give
a simplified proof of that fact (Proposition 11).  The interesting
consequence for our purpose is that if Alice's measurements are highly predictable to Bob, then Alice can
copy out her measurement outcomes in advance, thus making her strategy approximately classical.  We take
this a step further, and show that if Bob first performs his own measurement on $E$ the resulting classical-quantum
correlation is also approximately preserved by Alice's measurements (which is not necessarily true of the original entangled state $\rho_{AB}$).
This is sufficient to  show that an approximately-guessable strategy yields an approximately classical strategy.

The subtleties in the proof are in establishing the error terms that arise when Alice copies out multiple measures
from her side of the state.  We note that the proof crucially requires that the game $G$ has complete support.  An interesting
further avenue is to explore how local randomness may break down if the condition is not satisfied.

In section~\ref{sec:disc} we discuss the implications of our result.

\section{Preliminaries}

For any finite-dimensional Hilbert space $V$, let
$L ( V )$ denote the vector space of linear
automorphisms of $V$.  For any $M, N \in L ( V )$,
we let $\left< M , N \right>$ denote $\Tr [ M^* N ]$.
If $S \subseteq V$ is a subspace of $V$,
let $\mathbf{P}_S \in L ( V )$ denote orthogonal projection
onto $V$.

Throughout this paper we fix four disjoint finite sets
$\mathcal{A, B, X, Y}$, which denote, respectively,
the first player's input alphabet, the second player's
input alphabet, the first player's output alphabet,
and the second player's output alphabet.
A \textit{$2$-player (input-output) correlation} is a vector
$(p_{ab}^{xy})$ of nonnegative reals,
indexed by $a, b, x, y \in \mathcal{A} \times \mathcal{B}
\times \mathcal{X} \times \mathcal{Y}$, satisfying $\sum_{xy} p_{ab}^{xy}  = 1$
for all pairs $(a, b)$, and satisfying the condition that the quantities
\begin{eqnarray}
\begin{array}{ccc}
p_a^x := \sum_y p_{ab}^{xy}, & \hskip0.2in &
p_b^y := \sum_x p_{ab}^{xy}
\end{array}
\end{eqnarray}
are independent of $b$ and $a$, respectively (no-signaling).

A $2$-player game is a pair $(q, H)$ where
\begin{eqnarray}
q \colon \mathcal{A} \times \mathcal{B} \to [0, 1 ]
\end{eqnarray}
is a probability distribution and
\begin{eqnarray}
H \colon \mathcal{A} \times \mathcal{B} \times \mathcal{X}
\times \mathcal{Y} \to [0, 1]
\end{eqnarray}
is a function. If $q(a, b)\ne 0$ for all $a\in \mathcal{A}$ and $b\in\mathcal{B}$,
the game is said to have a {\em complete support}. The expected score associated to
such a game for a $2$-player correlation $(p_{ab}^{xy})$
is
\begin{eqnarray}
\sum_{a, b, x, y }  q ( a, b ) H ( a, b, x, y ) p_{ab}^{xy}.
\end{eqnarray}
We will extend notation by writing $q(a) = \sum_b q(a,b)$, $q(b) = \sum_a q(a,b)$,
and $q ( a \mid b ) = q(a, b) / q ( b)$ (if $q( b ) \neq 0$).

A \textit{$2$-player strategy} is a $5$-tuple
\begin{eqnarray}\label{eqn:strategy}
\Gamma & = & ( D, E, \{ \{ R_a^x \}_x \}_a , 
\{ \{ S_b^y \}_y \}_b , \gamma )
\end{eqnarray}
such that $D, E$ are finite dimensional Hilbert spaces,
$\{ \{ R_a^x \}_x \}_a$ is a family of $\mathcal{X}$-valued
positive operator valued measures (POVMs) on $D$ (indexed by $\mathcal{A}$),
$\{ \{ S_b^y \}_y \}_b$ is a family of $\mathcal{Y}$-valued
positive operator valued measures on $E$,
and $\gamma$ is a density operator on $D \otimes E$.
The \textit{second player states} $\rho_{ab}^{xy}$ of $\Gamma$
are defined by
\begin{eqnarray}
\rho_{ab}^{xy} & := & \Tr_D \left[ \sqrt{R_a^x \otimes S_b^y}
\gamma \sqrt{R_a^x \otimes S_b^y} \right]
\end{eqnarray}
(These states are, more explicitly, the subnormalized states of Bob's system
that arise after both Alice and Bob have performed their measurements.)
Define $\rho_a^x$ by the same expression with
$S_b^y$ replaced by the identity operator.  (These represent
the pre-measurement states of the second-player.)
Define $\rho:=\Tr_D(\gamma)
=\sum_x\rho_a^x$ for any $a$.

We say that the
strategy $\Gamma$ \textit{achieves} the $2$-player correlation
$(p_{ab}^{xy} )$ if $p_{ab}^{xy} =  \Tr [ \gamma ( R_a^x \otimes S_b^y ) ]$
for all $a, b, x, y$.
If a $2$-player correlation $(p_{ab}^{xy})$ can be achieved by
a $2$-player strategy then we say that it is a \textit{quantum}
correlation.

If $(p_{ab}^{xy})$ is a convex combination
of product distributions (i.e., distributions of the form
$(q_a^x ) \otimes (r_b^y )$ where $\sum_x q_a^x = 1$
and $\sum_y r_b^y = 1$) then we say that $(p_{ab}^{xy})$
is a \textit{classical} correlation.
Note that if the underlying state of a quantum strategy
is separable (i.e., it is a convex combination of bipartite
product states) then the correlation it achieves is classical.
The maximum expected score that can be achieved
for a game $G$ by a classical correlation is denoted $\omega_c ( G )$.

\section{Perfect Guessing}

We first address the case of perfect guessing --- that is, the case
when the second-player states $\{ \rho_{ab}^{xy} \}_x$ that remain
after the game is played are perfectly distinguishable by Bob.  It
turns out that this condition will imply some strong structural conditions
on the strategy used by Alice and Bob, and it will imply in particular
that Alice's and Bob's score at the game $G$ cannot be better
than that of any classical strategy.

\subsection{Congruent strategies}

It is necessary to identify pairs of strategies that
are essentially the same from an operational standpoint.
We use a definition that is similar to definitions
from quantum self-testing (e.g., Definition 2.13 in
\cite{McKagueBQP_published}).

A \textit{unitary embedding}
from a $2$-player strategy
\begin{eqnarray}
\Gamma & = & ( D, E, \{ \{ R_a^x \}_x \}_a , 
\{ \{ S_b^y \}_y \}_b , \gamma )
\end{eqnarray}
to another $2$-player strategy
\begin{eqnarray}
\overline{\Gamma} & = & ( \overline{D}, \overline{E}, \{ \{ \overline{R}_a^x \}_x \}_a , 
\{ \{ \overline{S}_b^y \}_y \}_b , \overline{\gamma} )
\end{eqnarray}
is a pair of unitary embeddings $i \colon D \hookrightarrow
\overline{D}$ and $j \colon E \hookrightarrow \overline{E}$
such that $\overline{\gamma} = (i \otimes j ) \gamma (i \otimes j)^*$,
$R_a^x = i^* \overline{R}_a^x i$, and $S_b^y
 = j^* \overline{S}_b^y j$.
 
 Additionally, if $\Gamma$ is such that $D = D_1 \otimes D_2$,
and $R_a^x = G_a^x \otimes I$ for all $a, x$, then we will call
the strategy given by
\begin{eqnarray}
( D_1, E, \{ \{ G_a^x \}_a \}_x , 
\{ \{ S_b^y \}_y \}_b , \Tr_{D_2} \gamma )
\end{eqnarray} 
a \textit{partial trace} of $\Gamma$.  We can similarly
define a partial trace on the second subspace $E$
if it is a tensor product space.

We will say that two strategies $\Gamma$ and $\Gamma'$
are \textit{congruent} if there exists a sequence
of strategies $\Gamma = \Gamma_1, \ldots, \Gamma_n =
\Gamma'$ such that for each $i \in \{ 1, \ldots, n-1 \}$,
either $\Gamma_{i+1}$ is a partial trace of
$\Gamma_i$, or vice versa, or there is a unitary embedding
of $\Gamma_i$ into $\Gamma_{i+1}$, or vice versa.
This is an equivalence relation.  Intuitively, two strategies
are congruent if one can be constructed from the other by adding
or dropping irrelevant information.
Note that if two strategies
are congruent then they achieve the same
correlation.

\subsection{Essentially classical strategies}
We are ready to define the key concept in this section and to
state formally our main theorem.

\begin{definition} A quantum strategy~(\ref{eqn:strategy}) is said to be {\em essentially classical}
if  it is congruent to one where $\gamma$ commutes with $R_a^x$ for all $x$ and $a$.
\end{definition}

Note that if the above condition holds, then applying the measurement
map
\begin{eqnarray}
X & \mapsto & \sum_x \sqrt{G_a^x} X \sqrt{G_a^x}
\end{eqnarray}
to the system $D$ leaves
the state $\gamma$ of $DE$ unchanged.

We are interested in strategies after the application of which Bob can predict Alice's output given her input.
This is formalized as follows. If $\chi_1, \ldots, \chi_n$ are positive semidefinite
operators on some finite dimensional Hilbert space
$V$, then we say that $\{ \chi_1, \ldots, \chi_n \}$
is \textit{perfectly distinguishable} if 
$\chi_i$ and $\chi_j$ have orthogonal support for
any $i \neq j$.  This is equivalent to the condition
that there exists a projective measurement on $V$ which
perfectly identifies the state from the
set $\{ \chi_1 , \ldots, \chi_n \}$.
\begin{definition} A quantum strategy~(\ref{eqn:strategy}) {\em allows perfect guessing} (by Bob) 
if for any $a, b, y$, $\{\rho_{ab}^{xy}\}_x$ is perfectly distinguishable.
\end{definition}

\begin{theorem}[Main Theorem] \label{thm:main} If a strategy for a complete-support game allows
perfect guessing, then it is essentially classical.
\end{theorem}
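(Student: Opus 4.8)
The plan is to pass, through a chain of congruences, to a strategy in which Bob's system has been cut down to the ``classical'' part that actually carries the correlation, leaving a shared state that visibly commutes with every one of Alice's observables. The first step is a \emph{normal form}: purify $\gamma$, absorbing the purifying register into Alice's system $D$ (this is a partial trace in reverse, and since the new register is untouched by anyone's operators the second-player states $\rho_{ab}^{xy}$, hence the perfect-guessing hypothesis, are unaffected), and then restrict $D$ and $E$ to the supports of the two marginals of $\gamma$ (unitary embeddings). So we may assume $\gamma=|\psi\rangle\langle\psi|$ is pure with both marginals of full rank and Schmidt rank $\dim D=\dim E$; let $\tau\colon L(D)\to L(E)$ be the associated Schmidt anti-isomorphism, determined by $(M\otimes I)|\psi\rangle=(I\otimes\tau(M))|\psi\rangle$.

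Next I would extract the per-input consequences of perfect guessing. Fix $a$ and pick $b$ with $q(a,b)\neq 0$ (complete support guarantees one). From $\rho_{ab}^{xy}=\sqrt{S_b^y}\,\rho_a^x\,\sqrt{S_b^y}$, the perfect distinguishability of $\{\rho_{ab}^{xy}\}_x$ for every $y$, and $\sum_y S_b^y=I$, an elementary argument --- if a subspace lies in the union of two subspaces then it lies in one of them --- forces $\{\rho_a^x\}_x$ itself to be perfectly distinguishable on $E$; this is the exact incarnation of the no-disturbance fact behind Proposition~11. Hence $\rho_E=\sum_x\rho_a^x$ is block diagonal with respect to the projections $\Pi_a^x$ onto $\Supp \rho_a^x$, and a short calculation with $|\psi\rangle$ (using $\Pi_a^x\rho_a^{x'}=\delta_{xx'}\rho_a^{x'}$) yields $(R_a^x\otimes I)|\psi\rangle=(I\otimes\Pi_a^x)|\psi\rangle$, i.e.\ $\tau(R_a^x)=\Pi_a^x$; in particular each $\{\Pi_a^x\}_x$ is a projective measurement commuting with $\rho_E$.

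Now comes the step where complete support is essential. Since perfect guessing holds for \emph{every} pair $(a,b)$, and for a fixed $b$ the distinguishability of $\{\sqrt{S_b^y}\rho_a^x\sqrt{S_b^y}\}_x$ is --- given $[\rho_E,\Pi_a^x]=0$ --- equivalent to $\Pi_a^x S_b^y\Pi_a^{x'}=0$ for $x\neq x'$, every one of Bob's POVM elements commutes with every $\Pi_a^x$. Thus Bob's measurements (and $\rho_E$) lie in the commutant of the $*$-algebra $\mathcal N$ generated by $\{\Pi_a^x\}$. Writing $E\cong\bigoplus_k(\mathbb C^{d_k}\otimes\mathbb C^{m_k})$ with $\mathcal N=\bigoplus_k M_{d_k}(\mathbb C)\otimes I_{m_k}$, we get $S_b^y=\bigoplus_k I_{d_k}\otimes s_{b,k}^y$, while $\tau$ transports this to a matching decomposition of $D$ on which $R_a^x=\bigoplus_k r_{a,k}^x\otimes I_{m_k}$ and (because $\rho_E\in\mathcal N'$) $|\psi\rangle$ is maximally entangled across each pair of $\mathbb C^{d_k}$ factors. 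Finally embed $E$ into $E_2\otimes E_1:=(\bigoplus_k\mathbb C^{d_k})\otimes(\bigoplus_k\mathbb C^{m_k})$, so that Bob's measurements become $I_{E_2}\otimes\bigl(\bigoplus_k s_{b,k}^y\bigr)$, and take the partial trace over $E_2$: since $|\psi\rangle$ is maximally entangled on each $\mathbb C^{d_k}$, the reduced state on $D\otimes E_1$ is proportional to $I_{d_k}$ on each $D$-factor $\mathbb C^{d_k}$, hence commutes with every $R_a^x$. This strategy is congruent to the original and has $\gamma$ commuting with all $R_a^x$, so the original is essentially classical (and, as a byproduct, one obtains the $E_1\otimes E_2$ splitting of Corollary~\ref{maincor}).

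The delicate part is this last step: one must verify that $\tau$ really aligns the block decompositions of $D$ and $E$ (so that each $R_a^x$ acts as the identity on precisely the $\mathbb C^{d_k}$ factors being averaged away), that embedding-followed-by-partial-trace is a genuine congruence, and --- in the other direction --- that complete support is indispensable, since on an unconstrained input $b$ Bob could choose a measurement failing to commute with some $\Pi_a^x$, and then no such decomposition of $E$ would exist.
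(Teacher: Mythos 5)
Your argument is essentially sound and, for its middle steps, coincides with the paper's: you extract the projections $\Pi_a^x$ onto $\Supp\rho_a^x$ from the perfect distinguishability of the pre-measurement states, derive $\Pi_a^x S_b^y \Pi_a^{x'}=0$ from the full support of $\rho_E$, and invoke the bicommutant (Tsirelson-type) decomposition of $E$ --- exactly the paper's route through its Lemma on commuting families. Where you genuinely diverge is at the two ends. At the front, you purify $\gamma$ into Alice's system so as to have a Schmidt correspondence $\tau$ with $\tau(R_a^x)=\Pi_a^x$; the paper never purifies and works with a general $\gamma$. At the back, you establish the commutation of $\mathrm{Tr}_{E_2}\overline\gamma$ with $R_a^x$ by a structural analysis: $\rho_E\in\mathcal N'$ forces $\rho_E=\bigoplus_k I_{d_k}\otimes\sigma_k$, the purification is then maximally entangled between each $\mathbb C^{d_k}$ factor of $E$ and a matching $d_k$-dimensional block of $D$, and $\tau$ transports the block structure so that $R_a^x$ acts only on those blocks. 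I checked that the ``delicate'' alignment you flag does go through (writing $|\psi\rangle=\sum_{k,i,j}\sqrt{\mu_{kj}}\,|w_{kij}\rangle_D|i\rangle|v_{kj}\rangle$ and matching coefficients in $(R_a^x\otimes I)|\psi\rangle=(I\otimes\Pi_a^x)|\psi\rangle$ shows $R_a^x$ preserves each $D_{kj}=\mathrm{span}\{w_{kij}\}_i$ and acts there as $(\pi_{a,k}^x)^T$, while $\mathrm{Tr}_{E_2}\overline\gamma=\bigoplus_k I_{d_k}\otimes\chi_k$), so there is no gap, but you should know the paper sidesteps all of this with a two-line argument: setting $\tau_a^x=\mathrm{Tr}_{E_2}(\overline Q_a^x\overline\gamma)$, perfect correlation gives $\mathrm{Tr}[R_a^{x'}\tau_a^x]=\delta_{xx'}\mathrm{Tr}[\tau_a^x]$, and since $0\le R_a^{x'}\le I$ and $\tau_a^x\ge 0$ this forces $R_a^{x'}\tau_a^x=\delta_{xx'}\tau_a^x$, hence commutation with $\sum_x\tau_a^x=\mathrm{Tr}_{E_2}\overline\gamma$. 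That shortcut needs neither the purification nor the maximal-entanglement claim. Two small cautions: your ``union of two subspaces'' justification for the distinguishability of $\{\rho_a^x\}_x$ is not the right argument (the clean one is operational: composing $\{S_b^y\}$ with the per-$y$ discriminators of $\{\rho_{ab}^{xy}\}_x$ yields a POVM perfectly discriminating $\{\rho_a^x\}_x$, which forces orthogonal supports); and your identity $(R_a^x\otimes I)|\psi\rangle=(I\otimes\Pi_a^x)|\psi\rangle$ is not automatic for non-projective POVMs --- it holds here, but only via a norm computation that uses perfect guessing (one finds $\|(R_a^x\otimes I)|\psi\rangle-(I\otimes\Pi_a^x)|\psi\rangle\|^2\le 0$), so the ``short calculation'' deserves to be spelled out.
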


(We note that the converse of the statement is not true.
This is because even in a classical strategy, Alice's output may depend on some local randomness,
which Bob cannot perfectly predict.)

Before giving the proof of this result, we note the following proposition, which taken together with 
Theorem~\ref{thm:main} implies that any strategy that permits perfect guessing yields a classical correlation.

\begin{proposition}
\label{commutingprop}
The correlation achieved by an essentially classical strategy must be classical.
\end{proposition}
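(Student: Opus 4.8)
The plan is to reduce to the commuting case and then exploit the structure theory of finite-dimensional $C^*$-algebras. Since congruent strategies achieve the same correlation, we may assume from the outset that $\gamma$ commutes with $R_a^x$ for all $a$ and $x$. Let $\mathcal{N} \subseteq L(D)$ be the $C^*$-algebra generated by $\{R_a^x\}_{a,x}$ and $I_D$. Then every element of $\mathcal{N} \otimes I_E$ commutes with $\gamma$, so $\gamma$ lies in the commutant $(\mathcal{N} \otimes I_E)' = \mathcal{N}' \otimes L(E)$.

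Next I would invoke the standard decomposition of a finite-dimensional $C^*$-subalgebra of $L(D)$: there is a unitary identification $D \cong \bigoplus_k D_k \otimes D_k'$, the sum running over inequivalent irreducible constituents, under which $\mathcal{N} = \bigoplus_k L(D_k) \otimes I_{D_k'}$ and hence $\mathcal{N}' = \bigoplus_k I_{D_k} \otimes L(D_k')$. In this form each POVM element becomes $R_a^x = \bigoplus_k \tilde R_{a,k}^x \otimes I_{D_k'}$, with $\{\tilde R_{a,k}^x\}_x$ a POVM on $D_k$ for each $a$; and since $\gamma \in \mathcal{N}' \otimes L(E)$ is block diagonal with no surviving cross terms, $\gamma = \bigoplus_k \gamma_k$ with $\gamma_k = I_{D_k} \otimes \eta_k$ for some positive operator $\eta_k$ on $D_k' \otimes E$. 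Discarding any block with $\Tr[\gamma_k]=0$, set $p_k := \Tr[\gamma_k] = (\dim D_k)\Tr[\eta_k] > 0$ (so $\sum_k p_k = 1$) and let $\sigma_k := \Tr_{D_k'}[\eta_k]/\Tr[\eta_k]$, a density operator on $E$.

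Finally I would substitute this into $p_{ab}^{xy} = \Tr[\gamma(R_a^x \otimes S_b^y)]$ and evaluate block by block: pairing the identity factor $I_{D_k}$ with $\tilde R_{a,k}^x$ and tracing out $D_k$ produces the scalar $\Tr[\tilde R_{a,k}^x]$, and tracing out $D_k'$ replaces $\eta_k$ by $\Tr[\eta_k]\,\sigma_k$, so that after a short computation
\begin{equation}
p_{ab}^{xy} \;=\; \sum_k p_k\, q_{a,k}^x\, r_{b,k}^y, \qquad q_{a,k}^x := \frac{\Tr[\tilde R_{a,k}^x]}{\dim D_k}, \quad r_{b,k}^y := \Tr[\sigma_k S_b^y].
\end{equation}
Because $\sum_x \tilde R_{a,k}^x = I_{D_k}$ we have $\sum_x q_{a,k}^x = 1$, and because $\sigma_k$ is a state and $\{S_b^y\}_y$ a POVM we have $\sum_y r_{b,k}^y = 1$; hence $(p_{ab}^{xy})$ is a convex combination of the product distributions $(q_{a,k}^x)\otimes(r_{b,k}^y)$ and is therefore classical. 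The only delicate point is the bookkeeping in the second step: one must choose the block decomposition so that the commutant is \emph{exactly} block diagonal (which requires grouping summands by isomorphism class of irreducible constituent, not merely by matrix block) and must track the weights $p_k$ and normalizations correctly. Note that, in contrast to Theorem~\ref{thm:main}, this argument does not use the complete-support hypothesis.
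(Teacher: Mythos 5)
Your proof is correct, and it takes a genuinely different route from the paper's. The paper argues operationally: since each measurement $\{R_a^x\}_x$ commutes with $\gamma$, Alice can apply the nondestructive measurements $\Phi_1,\dots,\Phi_n$ in sequence, copying every potential outcome into classical registers $V_1,\dots,V_n$ without disturbing the shared state; the marginal on $V_1\cdots V_n E$ then reproduces the correlation and is classical on Alice's side, hence separable, hence the correlation is classical. You instead argue algebraically, placing $\gamma$ in the commutant of the algebra $\mathcal{N}$ generated by Alice's POVM elements and invoking the structure theorem $\mathcal{N}'=\bigoplus_k I_{D_k}\otimes L(D_k')$ --- essentially the same von Neumann-algebra decomposition as Lemma~\ref{commutelemma}, applied here on Alice's side --- to extract an explicit local-hidden-variable model with hidden variable $k$, weights $p_k$, and product distributions $(q^x_{a,k})\otimes(r^y_{b,k})$. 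Your bookkeeping checks out: the central projections of $\mathcal{N}$ force $\gamma$ to be block diagonal, positivity of $\gamma_k=I_{D_k}\otimes\eta_k$ gives $\eta_k\ge 0$, and $\sum_x\tilde R^x_{a,k}=I_{D_k}$ follows from comparing $\sum_x R_a^x=I_D$ blockwise; you are also right that complete support is not needed here (the paper's Proposition does not assume it either). What your version buys is an explicit classical decomposition rather than an appeal to ``separable implies classical.'' What the paper's version buys is robustness: the copy-out construction via $\Phi_a$ is reused verbatim in Proposition~\ref{uniformprop} with approximate commutativity and accumulated error terms, whereas the commutant/structure-theorem argument has no evident approximate analogue. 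Given that the paper's robust theorem is the main point, its choice of proof for the exact case is the one that generalizes; yours is a clean, self-contained alternative for the exact statement.
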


\proof{
We need only to consider the case that $\gamma$ commutes with
$R_a^x$ for all $a, x$.
For each $a \in \mathcal{A}$, let $V_a = \mathbb{C}^\mathcal{X}$,
and let $\Phi_a \colon L ( D ) \to L ( V_a \otimes D )$
be the nondestructive measurement defined by 
\begin{eqnarray}
\Phi_a ( T ) & = & \sum_{x \in \mathcal{X}} \left| x \right> \left< x \right|
\otimes \sqrt{R_a^x} T \sqrt{R_a^x}.
\end{eqnarray}
Note that by the commutativity
assumption, such operation leaves the state of $DE$ unchanged.

Since the measurements $\{R_a^x \}^x$ do not disturb the state of $DE$,
Alice can copy out all of her measurement outcomes in advance.
Without loss of generality, assume $\mathcal{A} = \{ 1, 2, \ldots, n \}$.
Let $\Lambda  \in L ( V_1 \otimes \ldots \otimes V_n \otimes D \otimes E )$
be the state that arises from applying the superoperators
$\Phi_1, \ldots, \Phi_n$, in order, to $\gamma$.  For any
$a \in \{ 1, \ldots, n \}$, the reduced state $\Lambda^{V_a E}$
is precisely the same as the result of taking the state
$\gamma$, applying the measurement $\{ R_a^x \}_x$
to $D$, and recording the result in $V_a$.
Alice and Bob can therefore generate the correlation $(p_{ab}^{xy})$
from the marginal state $\Lambda^{V_1 \cdots V_n E}$ alone (if
Alice possesses $V_1, \ldots, V_n$ and Bob possesses $E$).  Since this state is classical
on Alice's side, and therefore separable, the result follows.
}

\begin{corollary}\label{maincor}If a strategy for a complete-support game allows perfect guessing,
the correlation achieved must be classical. $\Box$
\end{corollary}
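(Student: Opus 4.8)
The plan is to obtain Corollary~\ref{maincor} directly from the two results that immediately precede it, treating Theorem~\ref{thm:main} as a black box (since its own proof is deferred to the next subsection). The argument is a short two-step chain. First, I would invoke Theorem~\ref{thm:main}: because the game $G$ has complete support and the given strategy allows perfect guessing, the strategy is essentially classical, i.e.\ congruent to some strategy $\Gamma'$ whose state $\gamma'$ commutes with every POVM element $R_a^x$ of Alice. Second, I would apply Proposition~\ref{commutingprop} to $\Gamma'$, which tells us that the correlation achieved by $\Gamma'$ is classical (the proof of that proposition exhibits it as achievable from a state that is separable across the Alice--Bob cut after Alice copies out all of her measurement outcomes).

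The one point that needs to be flagged — and it was already recorded in the discussion of congruent strategies — is that congruence preserves the achieved correlation: both the partial-trace operation and unitary embeddings leave the numbers $\Tr[\gamma(R_a^x\otimes S_b^y)]$ unchanged, so the entire congruence class of a strategy shares a single input-output correlation. Hence the correlation produced by the original strategy coincides with that produced by $\Gamma'$, which the previous paragraph has shown to be classical, and we are done. I do not expect any genuine obstacle in this step; all of the mathematical substance of the corollary is carried by Theorem~\ref{thm:main}, and the corollary itself is essentially just that theorem repackaged together with Proposition~\ref{commutingprop}. The only thing to be slightly careful about is not to forget the complete-support hypothesis, which is exactly the hypothesis Theorem~\ref{thm:main} requires.
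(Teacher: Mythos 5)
Your proposal is correct and matches the paper's intended argument exactly: the corollary is stated as the immediate combination of Theorem~\ref{thm:main} with Proposition~\ref{commutingprop}, using the fact (noted in the paper's discussion of congruence) that congruent strategies achieve the same correlation. Nothing is missing.
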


\subsection{Proving Theorem~\ref{thm:main}}
The proof will proceed as follows. 
First, we show that Alice's measurements $R_a := \{  R_a^x \}_x$ induce projective measurements
$Q_a:=\{Q_a^x\}_x$ on Bob's system.
Next, we argue that $Q_a$ commutes with Bob's own measurement $S_b:=\{S_b^y\}_y$ for any $b$.
This allows us to isometrically decompose Bob's system into two subsystems $E_1\otimes E_2$,
such that $S_b$ acts trivially on $E_2$, while $E_2$ alone can be used to predict $x$ given $a$.
The latter property allows us to arrive at the conclusion that $R_a$ commutes with $\gamma_{DE_1}$.

We will need the following lemma, which is well-known and commonly
attributed to Tsirelson.
We will only sketch the proof, and more details
can be found in Appendix A of \cite{Doherty:2008}.
The lemma asserts that for families of positive semidefinite
operators $\{ M_j \}, \{ N_k \}$ on a finite-dimensional
space $V$, \textit{commutativity} (i.e., the condition
that $M_j, N_k$ commute for and $j,k$) implies
\textit{bipartiteness} (i.e., the condition that $\{ M_j \}$ and $\{ N_k \}$
can be obtained as pullbacks via a map $V \to V_1 \otimes V_2$
of operators on $V_1$ and $V_2$, respectively).

\begin{lemma}
\label{commutelemma}
Let $\{ M_j \}, \{ N_k \}$ be positive semidefinite operators
on a finite-dimensional Hilbert space $V$ such that $M_j N_k =
N_k M_j$ for all $j,k$.  Then,
there exists a unitary embedding
$i \colon V \hookrightarrow V_1 \otimes V_2$ and
and positive semidefinite operators $\overline{M_j}$ on $V_1$
and $\overline{N}_k$ on $V_2$ 
such that
$M_j = i^* (\overline{M}_j \otimes \mathbb{I} ) i$ and
$N_k = i^* ( \mathbb{I} \otimes \overline{N}_k ) i$ for all $j, k, \ell, m$.
\end{lemma}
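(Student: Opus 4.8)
The plan is to derive the lemma from the structure theory of finite-dimensional $*$-algebras. First I would replace the two families of operators by the unital $*$-subalgebras they generate: let $\mathcal{M} \subseteq L(V)$ be the algebra generated by $\{\mathbb{I}\} \cup \{M_j\}$ and $\mathcal{N} \subseteq L(V)$ the algebra generated by $\{\mathbb{I}\} \cup \{N_k\}$. Since each $M_j$ and each $N_k$ is self-adjoint (being positive semidefinite), these algebras are automatically closed under $*$, and the hypothesis $M_j N_k = N_k M_j$ extends by linearity and multiplicativity to $[\mathcal{M},\mathcal{N}] = 0$; in particular $\mathcal{N}$ lies in the commutant $\mathcal{M}'$ of $\mathcal{M}$ inside $L(V)$.

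Next I would apply the Artin--Wedderburn classification together with the double-commutant theorem in finite dimensions: for the unital $*$-subalgebra $\mathcal{M} \subseteq L(V)$ there is a unitary isomorphism $V \cong \bigoplus_\alpha (V_1^\alpha \otimes V_2^\alpha)$ under which $\mathcal{M}$ becomes $\bigoplus_\alpha (L(V_1^\alpha) \otimes \mathbb{I})$ and its commutant $\mathcal{M}'$ becomes $\bigoplus_\alpha (\mathbb{I} \otimes L(V_2^\alpha))$. Carrying the $M_j$ and $N_k$ through this isomorphism, each $M_j$ acts as $\bigoplus_\alpha (m_j^\alpha \otimes \mathbb{I})$ with $m_j^\alpha \in L(V_1^\alpha)$, and, because $N_k \in \mathcal{M}'$, each $N_k$ acts as $\bigoplus_\alpha (\mathbb{I} \otimes n_k^\alpha)$ with $n_k^\alpha \in L(V_2^\alpha)$. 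Positivity of $M_j$ forces $m_j^\alpha \otimes \mathbb{I} \geq 0$, hence $m_j^\alpha \geq 0$, and similarly $n_k^\alpha \geq 0$.

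Finally I would collapse the direct sum into a single tensor product. Put $V_1 := \bigoplus_\alpha V_1^\alpha$ and $V_2 := \bigoplus_\alpha V_2^\alpha$, let $i \colon V \hookrightarrow V_1 \otimes V_2$ be the unitary embedding sending the summand $V_1^\alpha \otimes V_2^\alpha$ into $V_1 \otimes V_2$ through the obvious inclusions of the two tensor factors, and set $\overline{M}_j := \bigoplus_\alpha m_j^\alpha \in L(V_1)$ and $\overline{N}_k := \bigoplus_\alpha n_k^\alpha \in L(V_2)$, both positive semidefinite. A computation on each summand --- using that $\overline{M}_j$ restricts to $m_j^\alpha$ on $V_1^\alpha$, so that $\overline{M}_j \otimes \mathbb{I}$ preserves the image of $i$ --- shows that $i^*(\overline{M}_j \otimes \mathbb{I}) i = M_j$ and likewise $i^*(\mathbb{I} \otimes \overline{N}_k) i = N_k$, which is the claim.

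The step I would expect to take the most care is the structure theorem in the middle: one needs that adjoining $\mathbb{I}$ makes $\mathcal{M}$ act non-degenerately so that the decomposition of $V$ really exhausts it, and --- crucially --- that the commutant of $\mathcal{M}$ is \emph{exactly} $\bigoplus_\alpha (\mathbb{I} \otimes L(V_2^\alpha))$ rather than something larger, which is precisely where the double-commutant theorem (or a direct computation with matrix units) enters. Everything else is bookkeeping: checking that the block decompositions of $\mathcal{M}$ and of $\mathcal{M}'$ are indexed by the same $\alpha$, that positivity passes to the blocks, and that $i$ intertwines the operators summand by summand. Since only a sketch is needed here, I would defer these verifications to the cited Appendix A of \cite{Doherty:2008}.
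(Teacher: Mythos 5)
Your argument is correct and follows essentially the same route as the paper's sketch: both decompose $V$ as $\bigoplus_\alpha V_1^\alpha \otimes V_2^\alpha$ via the structure theory of finite-dimensional $*$-algebras (the paper phrases this as ``the theory of von Neumann algebras,'' deferring details to Appendix A of \cite{Doherty:2008}), place the $M_j$ in one tensor factor and the $N_k$ in the commutant's factor, and then collapse the direct sum into a single tensor product $V_1 \otimes V_2$ by the block-diagonal embedding. Your additional care about why the commutant is exactly $\bigoplus_\alpha (\mathbb{I} \otimes L(V_2^\alpha))$ and why positivity passes to the blocks is a faithful filling-in of the steps the paper leaves implicit.
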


\textbf{Proof sketch.} Via the theory of von Neumann algebras,
there exists an isomorphism
\begin{eqnarray}
V & \cong & \bigoplus_\ell V_\ell \otimes W_\ell
\end{eqnarray}
under which
\begin{eqnarray}
M_j  & \cong & \bigoplus_\ell M_j^\ell \otimes I, \\
N_k & \cong & \bigoplus_\ell I \otimes N_k^\ell.
\end{eqnarray}
Let $V_1 = \bigoplus_\ell V_\ell, V_2 = \bigoplus_\ell W_\ell$,
and let $\overline{M}_j  =  \bigoplus_\ell M_j^\ell, \overline{N}_k  = \bigoplus_\ell N_j^\ell$. $\Box$

\begin{proof}[Proof of Theorem~\ref{thm:main}]
Express $\Gamma$ as in (\ref{eqn:strategy}).  
Without loss of generality, we may assume
that $\Supp \rho = E$.
By the assumption that $\Gamma$ allows perfect guessing,
for any $a$, the second-player states $\{ \rho_a^x \}_x$
must be perfectly distinguishable (since otherwise
the post-measurement states $\{ \rho_{ab}^{xy} \}_x$
would not be). 
Therefore, we can find projective measurements $\{ \{ Q_a^x \}_x \}_a$ on $E$
such that
\begin{eqnarray}
Q_a^x \rho Q_a^x = \rho_a^x.
\end{eqnarray}
Note that for any fixed $a$, if Alice and Bob were to prepare the state 
$\gamma_{DE}$ and Alice were to measure with $\{R_a^x \}_x$ and Bob were to measure with
$\{Q_a^x \}_x$, their outcomes would be the same.

We have that the states
\begin{eqnarray}
\label{rho1st} \rho_{ab}^{xy} & = & \sqrt{ S_b^y } Q_a^x \rho Q_a^x \sqrt{S_b^y } \\
\label{rho2nd}
\rho_{ab}^{x'y} & = & \sqrt{ S_b^{y} } Q_a^{x'} \rho Q_a^{x'} \sqrt{S_b^y }
\end{eqnarray}
have orthogonal support for any $x \neq x'$.  Since $\Supp \rho = E$,  we have $c \mathbb{I} \leq \rho$ for some $c > 0$.
Therefore,
\begin{eqnarray}
\left< \sqrt{ S_b^y } c Q_a^x \sqrt{S_b^y } , \sqrt{ S_b^y } c Q_a^{x'} \sqrt{S_b^y } \right> & = & 0,
\end{eqnarray}
which implies, using the cyclicity of the trace function,
\begin{eqnarray}
\left\| Q_a^x S_b^y Q_a^{x'} \right\|_2 & = & 0.
\end{eqnarray}
Therefore,
the measurements $\{ Q_a^x \}_x$ and $\{ S_b^y \}_y$ commute
for any $a, b$. (This is clear from writing out the matrix $S_b^y$ in block form
under the subspaces determined by the projections $\{ Q_a^x \}_x$.)

By Lemma~\ref{commutelemma}, we can find a unitary embedding $i \colon E \hookrightarrow E_1 \otimes E_2$ 
and POVMs $\{\overline{S}_b^y\}_y, \{\overline{Q}_a^x\}_x$ on $E_1, E_2$ such that
$S_b^y = i^* ( \overline{S}_b^y \otimes \mathbb{I} ) i$ and
$Q_a^x = i^* ( \mathbb{I} \otimes \overline{Q}_a^x ) i$.
With 
\begin{equation}
\overline{\gamma}= (\mathbb{I}_D \otimes i) \gamma (\mathbb{I}_D \otimes i^*),
\end{equation} 
the strategy
$\Gamma$ embeds into the strategy
\begin{eqnarray*}
 \Gamma':=\left( D, E_1 \otimes E_2, \{ \{ R_a^x \}_x \}_a , \{ \overline{S}_b^y \otimes \mathbb{I}_{E_2} \}_y \}_b,
\overline{\gamma}\right).
\end{eqnarray*}

For any fixed $a$, the state $\overline{\gamma}$ is such that applying
the measurement $\{ R_a^x \}_x$ to the system $D$ and the measurement
$\{ \overline{Q}_a^x \}_x$ to the system $E_2$ always yields the same outcome.
In particular, if we let
\begin{eqnarray}
\tau_a^x & = & \Tr_{E_2} \left( \overline{Q}_a^x \overline{\gamma}  \right),
\end{eqnarray}
then $\Tr [ R_a^{x'} \tau_a^x ]$ will always be equal to $1$ if $x = x'$ and equal to $0$ otherwise.
Therefore $\{ R_a^{x} \}_x$ commutes with the operators $\{ \tau_a^x \}_x$, and thus also with 
their sum $\sum_x \tau_a^x = \Tr_{E_2} \overline{\gamma}$. 

Thus
if we trace out the strategy $\Gamma'$ over the system $E_2$,
we obtain a strategy (congruent to the original strategy $\Gamma$)
in which Alice's measurement operators commute with the shared state.
\end{proof}

\section{Approximate Guessing}

Next we address the case where the second-player states
$\rho_{ab}^{xy}$ are not necessarily perfectly distinguishable
as $x$ varies, but are approximately distinguishable.  (Thus, 
if Bob were given Alice's input after the game was played and
asked to guess her output, he could do so with probability close to $1$.)
We begin by quantizing ``approximate'' distinguishability.

\begin{definition}
Let $\{ \rho_i \}_{i=1}^n$ denote a finite set of positive semidefinite
operators on a finite dimensional Hilbert space $V$.  Then, let 
\begin{eqnarray}
\Dist \{ \rho_i \} & = & \max \sum_i \Tr ( T_i \rho_i ), 
\end{eqnarray}
where the maximum is taken over all POVMs $\{ T_i \}_{i=1}^n$
on $V$.
\end{definition}

Note that if $\sum_i \Tr ( \rho_i) = 1$, and each $\rho_i$ is nonzero, then this quantity has
the following interpretation:
if Alice gives Bob a state from the set $\{ \rho_i / \Tr ( \rho_i ) \}$
at random according to the distribution $(\Tr (\rho_i ) )_i$, then $\Dist \{ \rho_i \}$
is the optimal probability that Bob can correctly guess the state.  This quantity
is well-studied (see, e.g., \cite{Spehner:2014_published}).

When we discussed perfect distinguishability, we made use 
of measurements that commuted with a given state.
In the current section we will need an approximate
version of such commutativity, and thus we make the following definition.
\begin{definition}
Let $\Phi \colon L ( V ) \to L ( V )$ denote a completely positive trace-preserving
map over a finite-dimensional Hilbert space $V$.  Let $\beta \in L ( V )$ denote a density
operator on $V$.  Then we say that \textbf{$\Phi$ is $\epsilon$-commutative
with $\beta$ if}
\begin{eqnarray}
\left\| \Phi ( \beta ) - \beta \right\|_1 & \leq & \epsilon.
\end{eqnarray}
\end{definition}
Note that this relation obeys a natural triangle inequality:
if $\Phi_1$ is $\epsilon_1$-commutative with $\beta$, and $\Phi_2$
is $\epsilon_2$-commutative with $\beta$, then
\begin{eqnarray*}
\left\| \Phi_2 ( \Phi_1 ( \beta )) - \beta \right\|_1 & \leq &
\left\| \Phi_2 ( \Phi_1 ( \beta ) ) - \Phi_2 ( \beta) \right\|_1 + \left\|
\Phi_2 ( \beta ) - \beta \right\|_1 \\
& \leq & \left\| \Phi_1 ( \beta) - \beta \right\|_1 + \epsilon_2 \\
& \leq & \epsilon_1 + \epsilon_2.
\end{eqnarray*}

The following known proposition will be an important building block.  We give a proof
that is a significant
simplification of a method from Lemma 29 in \cite{vidickthesis}.
(See also Lemma 2 in \cite{Mancinska:2014} for a related result.)

\begin{proposition}
\label{disturbprop}
Let $\Lambda \in L ( A \otimes B )$ be a density operator
and $\{ F_i \}_{i=1}^n$ a projective measurement on $A$ such that the induced states
$\Lambda_i^B := \Tr_A ( F_i \Lambda )$ satisfy
\begin{eqnarray}
\Dist \{ \Lambda^B_i \} & = & 1 - \delta.
\end{eqnarray}
Then, the superoperator $X \mapsto \sum_i F_i X F_i$ is
$(2 \sqrt{\delta} + \delta )$-commutative with $\Lambda^A := \Tr_B \Lambda$.
\end{proposition}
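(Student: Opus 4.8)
The plan is to reduce the claim to $n$ single-projector gentle-measurement estimates, one for each $F_i$. First I would pick, using compactness of the set of POVMs, a POVM $\{T_i\}_{i=1}^n$ on $B$ attaining the maximum in the definition of $\Dist$, so that $\sum_i \Tr(T_i \Lambda_i^B) = 1-\delta$. The objects to work with are the subnormalized states on $A$
\begin{eqnarray*}
\sigma_i & := & \Tr_B\!\left[ (I \otimes T_i)\,\Lambda \right] ,
\end{eqnarray*}
which satisfy $\sum_i \sigma_i = \Tr_B(\Lambda) = \Lambda^A$ since $\sum_i T_i = I$. Write $q_i := \Tr(F_i \sigma_i) = \Tr[(F_i \otimes T_i)\Lambda] = \Tr(T_i \Lambda_i^B)$ and $\mu_i := \Tr(\sigma_i) - q_i \ge 0$. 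Then $\sum_i q_i = 1-\delta$, whereas $\sum_i \Tr(\sigma_i) = \sum_i \Tr[(I\otimes T_i)\Lambda] = \Tr(\Lambda) = 1$, so that $\sum_i \mu_i = \delta$.

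Let $\Phi$ denote the superoperator $X \mapsto \sum_i F_i X F_i$ from the statement. Since $\Phi$ is linear and $\Lambda^A = \sum_i \sigma_i$, the triangle inequality gives
\begin{eqnarray*}
\left\| \Lambda^A - \Phi(\Lambda^A) \right\|_1 & \le & \sum_i \left\| \sigma_i - \Phi(\sigma_i) \right\|_1 ,
\end{eqnarray*}
and it suffices to bound each summand. Next I would split $\Phi(\sigma_i) = F_i \sigma_i F_i + \sum_{k \ne i} F_k \sigma_i F_k$. The tail $\sum_{k \ne i} F_k \sigma_i F_k$ is a sum of positive operators, so its trace norm is $\sum_{k\ne i} \Tr(F_k \sigma_i) = \Tr[(I - F_i)\sigma_i] = \mu_i$. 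For the head I would invoke the gentle operator lemma in the form $\| \rho - P \rho P \|_1 \le 2\sqrt{\Tr(\rho)\,\Tr[(I - P)\rho]}$, valid for any positive operator $\rho$ and projector $P$ (two lines of H\"older's inequality after rescaling $\rho$ to a state). With $P = F_i$ and $\rho = \sigma_i$ this yields $\| \sigma_i - F_i \sigma_i F_i \|_1 \le 2\sqrt{(q_i+\mu_i)\,\mu_i}$, hence $\| \sigma_i - \Phi(\sigma_i) \|_1 \le 2\sqrt{(q_i+\mu_i)\mu_i} + \mu_i$.

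Summing over $i$ and applying Cauchy--Schwarz to the square-root terms,
\begin{eqnarray*}
\sum_i \left\| \sigma_i - \Phi(\sigma_i) \right\|_1 & \le & 2\Big( \sum_i (q_i + \mu_i) \Big)^{1/2}\Big( \sum_i \mu_i \Big)^{1/2} + \sum_i \mu_i \;=\; 2\sqrt{1\cdot\delta} + \delta ,
\end{eqnarray*}
which is exactly $2\sqrt{\delta}+\delta$. The one delicate point — and the reason for routing the argument through the auxiliary states $\sigma_i$ rather than through $\Lambda$ itself — is securing the linear term as $\delta$ and not $2\delta$: the cruder two-block estimate $\|\sigma_i - F_i\sigma_i F_i\|_1 \le 2\sqrt{q_i\mu_i} + \mu_i$ would, after summation, give only $2\sqrt\delta + 2\delta$, and it is precisely the gentle operator lemma that absorbs the stray additive $\mu_i$ into the higher-order part of $2\sqrt{(q_i+\mu_i)\mu_i}$. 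The gentle operator lemma is the only ingredient not already contained in the excerpt.
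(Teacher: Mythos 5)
Your proof is correct, and it takes a genuinely different route from the paper's. The paper first purifies $\Lambda$ and replaces the optimal guessing POVM by a projective one, then writes $\Lambda^A = MM^*$, $\Tr_A\Lambda = M^*M$ for a matrix $M$ in block form relative to $\{F_i\}$ and $\{G_j\}$; the off-diagonal blocks carry total squared Frobenius weight $\delta$, the block-diagonal truncation $\overline{M}\overline{M}^*$ sits within trace distance $\delta$ of $\sum_i F_i \Lambda^A F_i$, and a Schatten-norm Cauchy--Schwarz on $MM^* - \overline{M}\,\overline{M}^*$ yields the $2\sqrt{\delta}$ term. You instead stay on the $A$ side throughout: you condition on Bob's guess to get the subnormalized states $\sigma_i = \Tr_B[(I\otimes T_i)\Lambda]$ summing to $\Lambda^A$, apply the trace-weighted gentle operator lemma $\|\rho - P\rho P\|_1 \le 2\sqrt{\Tr(\rho)\,\Tr[(I-P)\rho]}$ to each $(\sigma_i, F_i)$ pair, account for the cross terms $\sum_{k\neq i}F_k\sigma_i F_k$ exactly by their trace $\mu_i$, and finish with Cauchy--Schwarz over the outcomes. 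This avoids both the purification and the projectivization of the guessing POVM, and it cleanly isolates the one external ingredient (the gentle operator lemma, which is itself the same two-line H\"older argument the paper uses on $M(M-\overline{M})^*$); the weighted form $\sqrt{(q_i+\mu_i)\mu_i}$ is indeed the step that makes the outcome-wise sum collapse to $2\sqrt{1\cdot\delta}$ with no dependence on $n$, and that keeps the additive term at $\delta$ rather than $2\delta$. The paper's version, in exchange, is fully self-contained and makes the ``outcomes disagree with probability $\delta$'' picture explicit in the matrix $M-\overline{M}$. Both arguments give the identical constant $2\sqrt{\delta}+\delta$.
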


\proof{
Let $\alpha = \Lambda^A$.  
By assumption, there exists a POVM $\{G_i \}$ on $B$ such that
\begin{eqnarray}
\sum_i \Tr [ (F_i \otimes G_i ) \Lambda ] & = & 1 - \delta.
\end{eqnarray}
By standard arguments, we can assume without loss of generality
that $\{G_i \}$ is a projective measurement and that $\Lambda$ is pure.\footnote{
We can construct an enlargement
$B \subseteq \overline{B}$ such that $\mathbf{P}_B \overline{G}_i
\mathbf{P}_B = G_i$ for some projective measurement
$\{ \overline{G}_i \}$ on $\overline{B}$, and we can construct
an additional Hilbert space $E$ and a pure state $\overline{\Lambda}
\in L ( A \otimes B \otimes E )$ such that $\Tr_E \overline{\Lambda}
= \Lambda$.  The joint probability distribution of the measurements
$\{ F_i \}$ and $\{ \overline{G}_i \otimes I_E \}$ on $\overline{\Lambda}$
are the same as those of $\{ F_i \}$ and $\{ G_i \}$ on $\Lambda$.}

There is a linear map $M \colon \mathbb{C}^s \to \mathbb{C}^r$ 
such that $\Tr_A \Lambda = M^* M$ and $\alpha = \Tr_B \Lambda = M M^*$.
Upon choosing an appropriate basis for $A$ and $B$, we
can write $M$ with a block form determined by the spans of $\{F_i \}$ and $\{ G_j \}$:
\begin{eqnarray}
M & = & \left[ \begin{array}{c|c|c|c}
M_{11} & M_{12} & \cdots & M_{1n} \\
\hline
M_{21} & M_{22} & \cdots & M_{2n} \\
\hline
\vdots & & \ddots \\
\hline
M_{n1} & M_{n2} & \cdots & M_{nn}
\end{array} \right].
\end{eqnarray}
Let
\begin{eqnarray}
\overline{M} & = & \left[ \begin{array}{c|c|c|c}
M_{11} & 0 & \cdots & 0 \\
\hline
0 & M_{22} & \cdots & 0 \\
\hline
\vdots & & \ddots \\
\hline
0 & 0 & \cdots & M_{nn}
\end{array} \right].
\end{eqnarray}
Note that the probability of obtaining outcome $F_i$ for the measurement
on $A$ and outcome $G_j$ for the measurement on $B$ is
given by the quantity $\left\| M_{ij} \right\|_2^2$, and  the probability
that the outcomes of the measurements disagree is exactly $\left\| M -
\overline{M} \right\|_2^2$.  We have
\begin{eqnarray}
\left\| M - \overline{M} \right\|_2^2 & = & \delta.
\end{eqnarray}
Additionally, we can compare $\overline{M} \overline{M}^*$ to the post-measurement
state $\sum_i F_i \alpha F_i$.  The latter quantity is given by
\begin{eqnarray*}
\left[ \begin{array}{c|c|c|c}
\sum_k M_{1k} M_{1k}^* & 0 & \cdots & 0 \\
\hline
0 &  \sum_k M_{2k} M_{2k}^* & \cdots & 0 \\
\hline
\vdots & & \ddots \\
\hline
0 & 0 & \cdots & \sum_k M_{nk} M_{nk}^* \\
\end{array} \right], 
\end{eqnarray*}
and therefore the difference $(\sum_i F_i \alpha F_i  - \overline{M} \overline{M}^* )$
is equal to
\begin{eqnarray*}
\left[ \begin{array}{c|c|c|c}
\sum_{k\neq 1} M_{1k} M_{1k}^* & 0 & \cdots & 0 \\
\hline
0 &  \sum_{k \neq 2} M_{2k} M_{2k}^* & \cdots & 0 \\
\hline
\vdots & & \ddots \\
\hline
0 & 0 & \cdots & \sum_{k \neq n} M_{nk} M_{nk}^*, \\
\end{array} \right] \\
\end{eqnarray*}
which is a positive semidefinite operator
whose trace is exactly $\sum_{i \neq j} \left\| 
M_{ij} \right\|^2_2 = \delta$.  Thus,
\begin{eqnarray}
\left\| \sum_i F_i \alpha F_i - \overline{M} \overline{M}^* \right\|_1 
& = & \delta.
\end{eqnarray}
Therefore we have the following, using the Cauchy-Schwarz inequality:
\begin{eqnarray*}
&& \left\| \alpha - \sum_i F_i \alpha F_i \right\|_1 \\
& = &
\left\| M M^* - \sum_i F_i \alpha F_i \right\|_1 \\
& = & \left\| M ( M - \overline{M}^* ) + (M - \overline{M}) \overline{M}^* + \overline{M} \overline{M}^* - 
\sum_i F_i \alpha F_i \right\|_1 \\
& \leq & \left\| M ( M - \overline{M}^* ) \right\|_1 + \left\| (M - \overline{M}) \overline{M}^* \right\|_1 
\\ && + \left\| \overline{M} \overline{M}^* - 
\sum_i F_i \alpha F_i \right\|_1  \\
& \leq & \left\| M \right\|_2 \left\|  M - \overline{M}^*  \right\|_2 + \left\| M - \overline{M} \right\|_2 \left\| \overline{M}^* \right\|_2 + \delta \\
& \leq & 1 \cdot \sqrt{\delta} + \sqrt{\delta} \cdot 1 + \delta \\
& \leq & 2 \sqrt{\delta} + \delta,
\end{eqnarray*}
as desired.
}

The previous proposition showed that if a measurement by Alice is
highly predictable to Bob, then it does not disturb Alice's marginal state
by much.
The next corollary asserts Alice's measurement must also approximately preserve
any existing \textit{classical} correlation that Bob has with Alice's state.

\begin{corollary}
\label{classicalinfocor}
Let $\Lambda \in L ( A \otimes B \otimes C)$ be a density operator
which is classical on $C$.  (That is, $\Lambda = \sum_k \Lambda_k \otimes
\left| c_k \right> \left< c_k \right|$ for some orthonormal basis $\{c_1, \ldots, c_k \} 
\subseteq C$.)
Suppose that $\{ F_i \}_{i=1}^n$ is a projective measurement on $A$ such that the induced
states $\Lambda^{BC}_i := \Tr_A ( F_i \Lambda )$ satisfy
\begin{eqnarray}
\label{assumptionrepeat}
\Dist \{ \Lambda^{BC}_i \} & = & 1 - \delta.
\end{eqnarray}
Then, the superoperator $X \mapsto \sum_i (F_i \otimes I ) X (F_i \otimes I)$ is
$(2 \sqrt{\delta} + \delta )$-commutative with $\Lambda^{AC}$.
\end{corollary}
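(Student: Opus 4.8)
The plan is to handle the extra classical register $C$ by conditioning on it and invoking Proposition~\ref{disturbprop} separately in each classical branch. One cannot simply apply Proposition~\ref{disturbprop} with ``$A$'' replaced by $A \otimes C$, because the party doing the discrimination in~(\ref{assumptionrepeat}) holds $B \otimes C$ rather than $B$ alone; the crux of the argument is precisely that $C$ is \emph{classical}, so the discriminator may be assumed to read $C$ off first and then work branch by branch. Concretely, write $\Lambda = \sum_k \Lambda_k \otimes \left| c_k \right> \left< c_k \right|$ with each $\Lambda_k \geq 0$ on $A \otimes B$, set $p_k = \Tr \Lambda_k$, discard the indices with $p_k = 0$, and put $\widetilde{\Lambda}_k = \Lambda_k / p_k$ for the rest.

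First I would normalize the discriminating measurement. Let $\{ T_i \}$ be a POVM on $B \otimes C$ with $\sum_i \Tr [ T_i \Lambda_i^{BC} ] = 1 - \delta$, and let $T_i'$ be the dephasing of $T_i$ in the basis $\{ \left| c_k \right> \}$, so that $T_i' = \sum_k T_i^{(k)} \otimes \left| c_k \right> \left< c_k \right|$ with each $\{ T_i^{(k)} \}_i$ a POVM on $B$. Since every $\Lambda_i^{BC}$ is block diagonal in that basis and dephasing is self-adjoint, $\sum_i \Tr [ T_i' \Lambda_i^{BC} ] = \sum_i \Tr [ T_i \Lambda_i^{BC} ] = 1 - \delta$. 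Setting $q_k := \sum_i \Tr [ ( F_i \otimes T_i^{(k)} ) \Lambda_k ]$, this reads $\sum_k q_k = 1 - \delta$, and $0 \leq q_k \leq p_k$ because $T_i^{(k)} \leq I$.

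Next, for each $k$ the POVM $\{ T_i^{(k)} \}_i$ on $B$ witnesses $\Dist \{ \Tr_A ( F_i \widetilde{\Lambda}_k ) \} \geq q_k / p_k$. Writing $\delta_k := 1 - q_k / p_k$ and applying Proposition~\ref{disturbprop} to the density operator $\widetilde{\Lambda}_k$ --- using that the error bound $2 \sqrt{t} + t$ is increasing in $t$ and that $\epsilon$-commutativity implies $\epsilon'$-commutativity for $\epsilon' \geq \epsilon$ --- I get that $X \mapsto \sum_i F_i X F_i$ is $( 2 \sqrt{\delta_k} + \delta_k )$-commutative with $\widetilde{\Lambda}_k^A := \Tr_B \widetilde{\Lambda}_k$. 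Scaling by $p_k$ and summing over $k$, and using that $\Lambda^{AC} = \sum_k ( p_k \widetilde{\Lambda}_k^A ) \otimes \left| c_k \right> \left< c_k \right|$ is block diagonal so that its trace norm decomposes as a sum over $k$, I obtain
\begin{eqnarray*}
\left\| \sum_i ( F_i \otimes I ) \Lambda^{AC} ( F_i \otimes I ) - \Lambda^{AC} \right\|_1 & \leq & \sum_k p_k \bigl( 2 \sqrt{\delta_k} + \delta_k \bigr).
\end{eqnarray*}

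Finally I would close with the bookkeeping. Since $p_k \delta_k = p_k - q_k$, we have $\sum_k p_k \delta_k = 1 - \sum_k q_k = \delta$; and by Cauchy--Schwarz, $\sum_k p_k \sqrt{\delta_k} = \sum_k \sqrt{p_k} \sqrt{p_k \delta_k} \leq \sqrt{\sum_k p_k} \, \sqrt{\sum_k p_k \delta_k} = \sqrt{\delta}$. Hence $\sum_k p_k ( 2 \sqrt{\delta_k} + \delta_k ) \leq 2 \sqrt{\delta} + \delta$, which is the claim. The step that requires care is the dephasing of $\{ T_i \}$: it is what legitimizes treating the problem branch by branch, and without it the $B \otimes C$-discrimination would not descend to a $B$-discrimination in each branch. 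The Cauchy--Schwarz estimate at the end is what recovers exactly the constant $2 \sqrt{\delta} + \delta$ rather than a weaker one.
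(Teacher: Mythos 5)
Your proof is correct, but it takes a genuinely different route from the paper's. The paper handles the classical register $C$ by \emph{copying} it: it adjoins a duplicate register $\overline{C}$ via $\left| c_k \right> \mapsto \left| c_k \overline{c_k} \right>$, notes that this leaves the state on $ABC$ untouched (so hypothesis (\ref{assumptionrepeat}) is unaffected), and then applies Proposition~\ref{disturbprop} once to the bipartition $A\overline{C}$ versus $BC$ with the projective measurement $\{ F_i \otimes I_{\overline{C}} \}$; the conclusion gives $(2\sqrt{\delta}+\delta)$-commutativity with $\overline{\Lambda}^{A\overline{C}}$, which is isomorphic to $\Lambda^{AC}$ because the copy is perfectly correlated with the original. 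That is a one-line reduction with no recombination step. You instead \emph{condition} on $C$: you dephase the discriminating POVM so that it descends to a POVM on $B$ in each classical branch, apply Proposition~\ref{disturbprop} to each normalized conditional state $\widetilde{\Lambda}_k$, and reassemble using block-diagonality of the trace norm together with $\sum_k p_k \delta_k = \delta$ and Cauchy--Schwarz to recover $\sum_k p_k ( 2\sqrt{\delta_k} + \delta_k ) \leq 2\sqrt{\delta} + \delta$. All the steps check out, including the monotonicity remark needed because your branch-wise distinguishability is only bounded below by $1 - \delta_k$ rather than equal to it. What your version buys is an explicit display of where classicality of $C$ is used (the dephasing step) and a concavity argument showing the constant survives averaging over branches; what the paper's version buys is brevity, at the cost of the small unstated verification that $\overline{\Lambda}^{A\overline{C}} \cong \Lambda^{AC}$.
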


\proof{
Let $\overline{C}$ be a Hilbert space
which is isomorphic to $C$, and let $\overline{\Lambda} \in L ( A \otimes B \otimes C \otimes \overline{C})$
be the state that arises from $\Lambda$ by copying out along the standard
basis: $\left| c_i \right> \mapsto \left| c_i \overline{c_i } \right>$.  This copying
leaves the state $ABC$ unaffected, so assumption (\ref{assumptionrepeat}) still applies.
Thus by Proposition~\ref{disturbprop}, the operator
$X \mapsto \sum_i ( F_i \otimes I ) X (F_i \otimes I )$ is $(2 \sqrt{\delta} + \delta )$-commutative
with $\Lambda^{A \overline{C}}$, and the same
holds for the isomorphic state $\Lambda^{AC}$.
}

Now we prove a preliminary version of our main result.  We assume
that the states $\{ \rho_{ab}^{xy} \}_x$ are highly distinguishable
on average, and then deduce that Alice's and Bob's correlation
must be approximately classical.

\begin{proposition}
\label{uniformprop}
Let
\begin{eqnarray}
\Gamma & = & ( D, E, \{ \{ R_a^x \}_x \}_a , 
\{ \{ S_b^y \}_y \}_b , \gamma )
\end{eqnarray}
be a two-player strategy.
Let
\begin{eqnarray}
\delta & = & 1 - \frac{1}{| \mathcal{A} | |\mathcal{B } |} \sum_{aby} \Dist \{ \rho_{ab}^{xy} \mid x \in \mathcal{X} \}.
\end{eqnarray}
Then, there exists a classical correlation $(\overline{p}_{ab}^{xy} )$ such that
\begin{eqnarray}
\frac{1}{| \mathcal{A} | |\mathcal{B} | } \sum_{abxy} \left| p_{ab}^{xy} - \overline{p}_{ab}^{xy} \right| & \leq & 
\sqrt{ 3 \delta} \left| \mathcal{A} \right|.
\end{eqnarray}
\end{proposition}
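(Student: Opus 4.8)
The plan is to push through a quantitative version of the ``copy out Alice's measurements'' argument used in the proof of Proposition~\ref{commutingprop}, keeping careful track of the errors that arise because Alice's measurements are only \emph{approximately} non-disturbing. First a routine reduction: by Naimark dilation we may assume each $\{R_a^x\}_x$ is a projective measurement (on an enlarged $D$), since this is a purely local modification on Alice's side and affects neither the correlation nor the classicality of any strategy built from it. Write $\mathcal{A}=\{1,\dots,n\}$. The crucial point — the one flagged in the introduction — is that although $\gamma$ itself need not be approximately preserved by Alice's measurements, the classical--quantum state obtained \emph{after Bob measures} is. Fix $b$, let $\Psi_b$ denote Bob's non-destructive measurement $\{S_b^y\}_y$ recorded into a fresh classical register $W_b$, and set $\sigma^{(b)}:=\Psi_b(\gamma)=\sum_y|y\rangle\langle y|_{W_b}\otimes\sqrt{S_b^y}\,\gamma\,\sqrt{S_b^y}$ on $D\otimes E\otimes W_b$. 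One checks directly that $\Tr_D[(R_a^x\otimes I_{EW_b})\sigma^{(b)}]=\sum_y|y\rangle\langle y|_{W_b}\otimes\rho_{ab}^{xy}$, and since $W_b$ is classical an optimal distinguisher reads it first, so $\Dist\{\Tr_D[(R_a^x\otimes I)\sigma^{(b)}]\mid x\in\mathcal{X}\}=\sum_y\Dist\{\rho_{ab}^{xy}\mid x\in\mathcal{X}\}=1-\delta_{ab}$, where $\delta_{ab}:=1-\sum_y\Dist\{\rho_{ab}^{xy}\mid x\}$ and $\delta=\frac{1}{|\mathcal{A}||\mathcal{B}|}\sum_{ab}\delta_{ab}$. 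Applying Corollary~\ref{classicalinfocor} with $A=D$, $B=E$, $C=W_b$ then shows the channel $\widetilde{\mathcal{R}}^{(b)}_a\colon X\mapsto\sum_x(R_a^x\otimes I_{W_b})X(R_a^x\otimes I_{W_b})$ on $D\otimes W_b$ is $(2\sqrt{\delta_{ab}}+\delta_{ab})$-commutative with $\mu^{(b)}:=\Tr_E\sigma^{(b)}=\sum_y|y\rangle\langle y|_{W_b}\otimes\Tr_E[S_b^y\gamma]$.

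Next I build the classical correlation. Let Alice apply her non-destructive measurements $\{R_1^x\},\dots,\{R_n^x\}$ in that order to $\gamma$, recording outcomes in registers $V_1,\dots,V_n$, and call the result $\Lambda$; Alice keeps $V_1\cdots V_n$ and Bob keeps $E$. Since $\Lambda^{V_1\cdots V_n E}$ is classical on Alice's side it is separable, so $\overline{p}_{ab}^{xy}:=\Tr[\Lambda(|x\rangle\langle x|_{V_a}\otimes S_b^y)]$ is a classical correlation. Because the later measurements $\{R_{a+1}^x\},\dots,\{R_n^x\}$ disturb neither $V_a$ nor $E$, a short computation (using that the $R_c^x$ are projective) gives $\overline{p}_{ab}^{xy}=\Tr[(R_a^x\otimes S_b^y)\Lambda_{a-1}]$, where $\Lambda_{a-1}$ is the state on $DE$ after Alice's first $a-1$ measurement channels $\mathcal{R}_c\colon X\mapsto\sum_x(R_c^x\otimes I_E)X(R_c^x\otimes I_E)$ have acted on $\gamma$; and $p_{ab}^{xy}=\Tr[(R_a^x\otimes S_b^y)\gamma]$. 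Now the identity $\Tr[(R_a^x\otimes S_b^y)Y]=\Tr[(R_a^x\otimes|y\rangle\langle y|_{W_b})\Tr_E\Psi_b(Y)]$ (valid for any $Y\in L(DE)$ because $R_a^x\otimes|y\rangle\langle y|_{W_b}$ acts trivially on $E$), together with the fact that $\Psi_b$ commutes with every $\mathcal{R}_c$, gives $\Tr_E\Psi_b(\Lambda_{a-1})=\widetilde{\mathcal{R}}^{(b)}_{a-1}\cdots\widetilde{\mathcal{R}}^{(b)}_1(\mu^{(b)})$, hence $\overline{p}_{ab}^{xy}-p_{ab}^{xy}=\Tr\big[(R_a^x\otimes|y\rangle\langle y|_{W_b})\big(\widetilde{\mathcal{R}}^{(b)}_{a-1}\cdots\widetilde{\mathcal{R}}^{(b)}_1(\mu^{(b)})-\mu^{(b)}\big)\big]$. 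Since $\{R_a^x\otimes|y\rangle\langle y|_{W_b}\}_{x,y}$ is a POVM on $D\otimes W_b$, summing over $x,y$ and using $\sum_i|\Tr[T_i\Delta]|\le\|\Delta\|_1$ yields $\sum_{xy}|\overline{p}_{ab}^{xy}-p_{ab}^{xy}|\le\|\widetilde{\mathcal{R}}^{(b)}_{a-1}\cdots\widetilde{\mathcal{R}}^{(b)}_1(\mu^{(b)})-\mu^{(b)}\|_1\le\sum_{c=1}^{a-1}(2\sqrt{\delta_{cb}}+\delta_{cb})$ by the triangle inequality for $\epsilon$-commutativity noted after Proposition~\ref{disturbprop}.

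It remains to average over $a,b$. Using $\delta_{cb}\le1$ to bound $2\sqrt{\delta_{cb}}+\delta_{cb}\le3\sqrt{\delta_{cb}}$ and interchanging the order of summation (each index $c$ then appears with weight $n-c=|\{a:a>c\}|$), I get $\frac{1}{|\mathcal{A}||\mathcal{B}|}\sum_{abxy}|\overline{p}-p|\le\frac{3}{|\mathcal{A}||\mathcal{B}|}\sum_b\sum_c(n-c)\sqrt{\delta_{cb}}$, and two applications of Cauchy--Schwarz — first in $c$, using $\sum_{c=1}^n(n-c)^2=\frac{(n-1)n(2n-1)}{6}\le n^3/3$, then in $b$, using $\sum_{bc}\delta_{cb}=|\mathcal{A}||\mathcal{B}|\delta$ — collapse the right side to exactly $|\mathcal{A}|\sqrt{3\delta}$, which is the claim. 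I expect the main obstacle to be the bookkeeping in the middle paragraph: one must resist trying to preserve $\gamma$ itself (which Alice's dephasing-type measurements can badly disturb) and instead commit to working with the post-Bob-measurement classical--quantum state $\mu^{(b)}$ on $D\otimes W_b$, where Corollary~\ref{classicalinfocor} does apply; and one must set up the order-dependent error accounting precisely enough that the $(n-c)$ weights combine with the constant $1/3$ from $\sum(n-c)^2$ to give the stated $\sqrt{3\delta}|\mathcal{A}|$ rather than a looser $3\sqrt{\delta}|\mathcal{A}|$.
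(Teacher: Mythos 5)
Your proposal is correct and follows essentially the same route as the paper's proof: reduce to projective measurements, copy Alice's outcomes sequentially into registers $V_1,\dots,V_n$ to define the classical correlation, apply Corollary~\ref{classicalinfocor} to the classical--quantum state obtained after Bob's non-destructive measurement to get $(2\sqrt{\delta_{ab}}+\delta_{ab})$-commutativity, accumulate the $a-1$ disturbance terms, and finish with the same double Cauchy--Schwarz yielding $\sqrt{3\delta}\,|\mathcal{A}|$. Your middle paragraph merely makes explicit some bookkeeping (the identity relating $\overline{p}_{ab}^{xy}$ to the iterated channels acting on $\mu^{(b)}$) that the paper states more tersely.
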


\proof{
We can assume without loss of generality that the measurements $\{ \{ R_a^x \}_x \}_a$ are all projective.
We begin with the same strategy as in the proof of  Proposition~\ref{commutingprop}.
For each $a \in \mathcal{A}$, let $V_a = \mathbb{C}^\mathcal{X}$,
and let $\Phi_a \colon L ( D ) \to L ( V_a \otimes D )$
be the nondestructive measurement defined by 
\begin{eqnarray}
\Phi_a ( T ) & = & \sum_{x \in \mathcal{X}} \left| x \right> \left< x \right|
\otimes R_a^x T R_a^x.
\end{eqnarray}
Let $\Phi_a^{V_a} = \Tr_D \circ \Phi_a$ and let $\Phi_a^D = \Tr_{V_a} \circ
\Phi_a$. Likewise let $W_b = \mathbb{C}^\mathcal{Y}$ for each $b \in \mathcal{B}$, let
$\Psi_b \colon L ( E ) \to L ( W_b \otimes E )$ be the nondestructive measurement
defined by
\begin{eqnarray}
\Psi_b ( T ) & = & \sum_{y \in \mathcal{Y}} \left| y \right> \left< y \right|
\otimes \sqrt{S_b^y} T \sqrt{S_b^y}.
\end{eqnarray}
Let $\Psi_b^{W_b} = \Tr_E \circ \Psi_b$ and $\Psi_b^E = \Tr_{W_b} \circ \Psi_b$.

Assume without loss of generality that $\mathcal{A} = \{ 1, 2, \ldots, n \}$.
Let $\Lambda  \in L ( V_1 \otimes \ldots \otimes V_n \otimes D \otimes E )$
be the state that arises from applying the superoperators
$\Phi_1 \otimes I_E, \ldots, \Phi_n \otimes I_E$, in order, to $\gamma$.  Let
$(\overline{p}_{ab}^{xy} )$ be the correlation that arises
from Alice and Bob sharing the reduced state $\Lambda^{V_1 \ldots V_n E }$,
Alice obtaining her output on input $a$ from
the register $V_a$, and Bob obtaining his output from his prescribed measurements
$\{ \{S_b^y \}_y \}_b$ to $E$.  Since the state $\Lambda^{V_1 \ldots V_n E }$
is a separable state over the bipartition $(V_1 \ldots V_n \mid E )$, the correlation
$( \overline{p}_{ab}^{xy} )$ is classical.

Let
\begin{eqnarray}
\delta_{ab} & := &  1 - \sum_y \Dist \{ \rho_{ab}^{xy} \mid x \in \mathcal{X} \}.
\end{eqnarray}
If Alice and Bob share the measured state $(I_D \otimes \Psi_b ) (\gamma)$ partitioned
as $(D \mid E W_b)$, then the probability that
Bob can guess Alice's outcome when she measures with $\{ R_a^x \}_x$ is given by 
$(1 - \delta_{ab})$.
By Corollary~\ref{classicalinfocor}, the operator $(\Phi_a^D \otimes I_{W_b} )$
is $(2 \sqrt{\delta_{ab}} + \delta_{ab} )$-commutative with 
$(I_D \otimes \Psi_b^{W_b} ) \gamma$.

We wish to compare $(p_{ab}^{xy} )$ and $(\overline{p}_{ab}^{xy} )$.  
For any $a, b$ the probability
vector $(\overline{p}_{ab}^{xy})_{xy}$ describes the joint distribution
of the registers $V_a W_b$ under the density operator
\begin{eqnarray}
((\Phi_a^{V_a} \circ \Phi^D_{a-1} \circ \Phi^D_{a-2} \circ \cdots \circ \Phi^D_1 ) \otimes \Psi^{W_b}_b) \gamma,
\end{eqnarray}
which by the previous paragraph is within trace-distance $\sum_{i=1}^{a-1}
(2 \sqrt{\delta_{ab} } + \delta_{ab} )$ from the distribution described by
$(p_{ab}^{xy})_{xy}$:
\begin{eqnarray}
(\Phi_a^{V_a}  \otimes \Psi^{W_b}_b) \gamma.
\end{eqnarray}

Thus we have the following, in which we use the Cauchy-Schwarz inequality:
\begin{eqnarray}
\sum_{abxy} \left| p_{ab}^{xy} - \overline{p}_{ab}^{xy} \right| & \leq & 
\sum_{ab} \sum_{i=1}^{a-1} ( 2 \sqrt{\delta_{ib}} + \delta_{ib} ) \\
& = & \sum_{ab} (n-a) (2 \sqrt{\delta_{ab}} + \delta_{ab} ). \\
& \leq & \sum_{ab} (n-a) 3 \sqrt{\delta_{ab}} \\
& \leq & 3 \sqrt{ \sum_{ab} (n-a)^2 } \sqrt{ \sum_{ab} \delta_{ab}}  \\
& = & 3 \sqrt{ \sum_{ab} (n-a)^2 }  \sqrt{n | \mathcal{B} | \delta} \\
& = & 3 \sqrt{ | \mathcal{B}  | \sum_{a} (n-a)^2 } \sqrt{n | \mathcal{B} | \delta} \\
& = & 3 | \mathcal{B}  | \sqrt{ \sum_{a} (n-a)^2 } \sqrt{n \delta} \\
& \leq & 3 | \mathcal{B}| \sqrt{ n^3/3 } \sqrt{n \delta},
\end{eqnarray}
which simplifies to the desired bound.
}

Proposition~\ref{uniformprop} is useful for addressing any game $(q, H)$
where the distribution $q$ is uniform (i.e., $q(a, b) = 1/(|\mathcal{A} | | \mathcal{B} | )$.)  We prove the following theorem
which applies to more general games.

\begin{theorem}
\label{robustthm}
Let $G = (q, H)$ be a complete-support game and let 
\begin{eqnarray}
\Gamma & = & ( D, E, \{ \{ R_a^x \}_x \}_a , 
\{ \{ S_b^y \}_y \}_b , \gamma )
\end{eqnarray}
be a two-player strategy.
Let
\begin{eqnarray}
\epsilon & = & 1 - \sum_{ab} q(a, b)  \sum_y \Dist \{ \rho_{ab}^{xy} \mid x \in \mathcal{X} \}.
\end{eqnarray}
Then, the score achieved by $\Gamma$ exceeds the best classical score $\omega_c (G )$ by at most
$C_G \sqrt{\epsilon}$, where
\begin{eqnarray}
\label{cg}
C_G & = & (3/2) \sqrt{ \sum_{ab} q ( b ) \left( q ( a \mid b ) \right)^{-1}}.
\end{eqnarray}
\end{theorem}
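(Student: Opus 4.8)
The plan is to follow the proof of Proposition~\ref{uniformprop}, but to weight every error estimate by the input distribution $q$ and, in addition, to average over orderings of Alice's input alphabet so as to recover the sharp constant. Since any classical correlation scores at most $\omega_c(G)$ and $H$ takes values in $[0,1]$, it suffices to exhibit a classical correlation $(\overline{p}_{ab}^{xy})$ with
\[
\sum_{abxy} q(a,b)\,\bigl| p_{ab}^{xy} - \overline{p}_{ab}^{xy} \bigr| \;\le\; C_G\sqrt{\epsilon};
\]
the bound on the excess score then follows from the triangle inequality applied to $\sum_{abxy} q(a,b) H(a,b,x,y)(\,\cdot\,)$.

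For each ordering $\pi$ of $\mathcal{A}$, construct a candidate correlation exactly as in Proposition~\ref{uniformprop}: form the nondestructive measurements $\Phi_a$ (coherently recording Alice's outcome for input $a$ into a register $V_a$), apply $\Phi_{\pi(1)},\dots,\Phi_{\pi(n)}$ in this order to $\gamma$, and let Alice read $V_a$ on input $a$ while Bob runs his prescribed $\{S_b^y\}_y$. The resulting state is separable across $(V_1\cdots V_n \mid E)$, so the induced correlation is classical. For a fixed pair $(a,b)$, the marginal on $V_a W_b$ obtained this way differs from $(p_{ab}^{xy})_{xy}$ only through the disturbance caused by those $\Phi_{a'}$ with $a'$ preceding $a$ in $\pi$. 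Writing $\delta_{ab} = 1 - \sum_y \Dist\{\rho_{ab}^{xy}\mid x\in\mathcal{X}\}$ and invoking Corollary~\ref{classicalinfocor} once per such $a'$---after Bob's measurement has been recorded in $W_b$, which is precisely what makes the relevant state classical on that register---together with the triangle inequality for $\epsilon$-commutativity, this disturbance is at most $\sum_{a'\prec_\pi a}\bigl(2\sqrt{\delta_{a'b}}+\delta_{a'b}\bigr)$ in trace norm, just as in Proposition~\ref{uniformprop}.

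Now average over a uniformly random ordering $\pi$. Since $\Pr_\pi[\,a'\prec_\pi a\,] = 1/2$ whenever $a'\ne a$, the expectation over $\pi$ of the $q$-weighted error is at most
\[
\frac{1}{2}\sum_{ab} q(a,b) \sum_{a'\ne a}\bigl(2\sqrt{\delta_{a'b}}+\delta_{a'b}\bigr)
\;\le\; \frac{3}{2}\sum_{ab} q(b)\,\sqrt{\delta_{ab}},
\]
using $\sum_{a\ne a'} q(a,b)\le q(b)$ and $2\sqrt{t}+t\le 3\sqrt{t}$ on $[0,1]$. A Cauchy--Schwarz step with weights $\sqrt{q(a,b)\delta_{ab}}$ and $q(b)/\sqrt{q(a,b)}$---here complete support is what guarantees $q(a,b)\ne 0$, hence that $q(a\mid b)$ is defined---turns the right side into $(3/2)\sqrt{\sum_{ab} q(b)\,(q(a\mid b))^{-1}}\cdot\sqrt{\sum_{ab}q(a,b)\delta_{ab}}$, which equals $C_G\sqrt{\epsilon}$ by the definition of $C_G$ in~(\ref{cg}) and the identity $\sum_{ab} q(a,b)\delta_{ab} = \epsilon$. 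Therefore some ordering $\pi$ achieves the displayed bound, and the corresponding classical correlation $(\overline{p}_{ab}^{xy})$ finishes the proof.

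The step I expect to be most delicate is the bookkeeping of the accumulated disturbance as Alice copies out her outcomes one input at a time: one must verify carefully that each single copy-out perturbs the already-partially-measured and Bob-measured state by only $2\sqrt{\delta_{a'b}}+\delta_{a'b}$ in trace norm. This is exactly why Corollary~\ref{classicalinfocor} (preservation of a classical--quantum correlation), rather than Proposition~\ref{disturbprop} by itself, is needed---after Bob measures, the object that must survive Alice's subsequent measurements is a classical--quantum correlation, not the original entangled state. The only genuinely new ingredient relative to Proposition~\ref{uniformprop} is the averaging over input orderings, which is what replaces the naive constant $3$ by $3/2$ in $C_G$; I would phrase it as ``there exists an ordering for which the estimate holds'' so that the constructed correlation carries no probabilistic language.
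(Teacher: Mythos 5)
Your proof is correct, and its skeleton is the same as the paper's: assume the $R_a^x$ projective, build the classical correlation by sequentially copying out Alice's outcomes into registers $V_a$, control the accumulated disturbance of the $V_aW_b$ marginal via Corollary~\ref{classicalinfocor} applied to the state $(I_D\otimes\Psi_b)(\gamma)$ (which is classical on $W_b$), and finish with a $q$-weighted Cauchy--Schwarz step. Where you genuinely diverge is in how the factor of $2$ is recovered so that the final constant is $C_G$ rather than $2C_G$. The paper fixes one ordering of $\mathcal{A}$, obtains $\sum_{abxy} q(a,b)\,|p_{ab}^{xy}-\overline{p}_{ab}^{xy}|\le 2C_G\sqrt{\epsilon}$ (using the slightly sharper weight $\sum_{k>a}q(k,b)$ in place of your $q(b)$), and then halves this via the elementary inequality $\sum_i u_i(t_i-s_i)\le\tfrac12\sum_i|t_i-s_i|$ for probability vectors and $u_i\in[0,1]$. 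You instead symmetrize over a uniformly random ordering of $\mathcal{A}$, use $\Pr[a'\prec a]=1/2$ to get the weighted $\ell_1$ bound $C_G\sqrt{\epsilon}$ directly for some ordering, and then apply the cruder bound $H\le 1$ to pass to the score. Both routes land on exactly the same constant; the paper's is a little shorter since it needs no probabilistic existence argument, while yours has the mild aesthetic advantage of not privileging an arbitrary enumeration of $\mathcal{A}$, and the two tricks could even be combined to improve $C_G$ slightly. One small point to make explicit in a final write-up: the reduction to projective $\{R_a^x\}_x$ (via Naimark dilation), which Corollary~\ref{classicalinfocor} requires, is stated in Proposition~\ref{uniformprop} and should be repeated here.
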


\proof{
Define $\overline{p}_{ab}^{xy}$ and $\delta_{ab}$ as in Proposition~\ref{uniformprop}.  We have the following (again
using the Cauchy-Schwartz inequality):
\begin{eqnarray}
&& \sum_{abxy} q ( a, b ) | p_{ab}^{xy} - \overline{p}_{ab}^{xy} |
\\
& \leq & \sum_{ab} q ( a, b ) \sum_{i=1}^{a-1} (2 \sqrt{\delta_{ib}} + \delta_{ib} ) \\
& \leq & \sum_{ab} q ( a, b ) \sum_{i=1}^{a-1} 3 \sqrt{\delta_{ib}}   \\
& = & \sum_{ab} \left( \sum_{k = a+1}^n q ( k, b ) \right) 3 \sqrt{\delta_{ab}} \\
& = & \sum_{ab} \left( \frac{\sum_{k=a+1}^n q ( k , b ) }{\sqrt{q ( a, b ) } }\right) 3 \sqrt{ q ( a, b ) \delta_{ab}} \\
& \leq & 3 \sqrt{ \sum_{ab} \frac{ (\sum_{k=a+1}^n q ( k , b ) )^2}{q ( a, b ) }} \sqrt{ \sum_{ab} q ( a, b ) \delta_{ab}} \\
& \leq & 3 \sqrt{ \sum_{ab} \frac{ (\sum_{k=a+1}^n q ( k , b ) )^2}{q ( a, b ) }} \sqrt{ \epsilon } \\
& \leq & 3 \sqrt{ \sum_{ab} \frac{ q ( b )^2}{q ( a, b ) }}\sqrt{\epsilon} \\
& \leq & 2 C_G \sqrt{ \epsilon} \label{finalbound}
\end{eqnarray}
Note that for any probability vectors $\mathbf{t} = (t_1, \ldots, t_m)$ and $\mathbf{s} = (s_1, \ldots, s_m)$ and any arbitrary vector
$(u_1, \ldots, u_m) \in [0, 1]^m$, we have
\begin{eqnarray}
\sum_i u_i ( t_i - s_i ) \leq \frac{1}{2} \sum \left| t_i - s_i \right|.
\end{eqnarray}
Applying this fact to the probability vectors $(q(a,b)p_{ab}^{xy})_{abxy}$
and $(q(a,b)\overline{p}_{ab}^{xy} )_{abxy}$  and the 
vector $(H ( a, b, x, y ) )_{abxy}$ implies that the difference
between the score achieved by $(p_{ab}^{xy})$ and the score
achieved by $(\overline{p}_{ab}^{xy})$ is no more than half the
quantity (\ref{finalbound}), which yields the desired result.
}

\section{Discussion}
\label{sec:disc}

When two players achieve a superclassical score at a nonlocal game, their
outputs must be at least partially unpredictable to an outside party, even
if that party knows the inputs that were given.  This fact
is one of the bases for randomness expansion from untrusted devices
\cite{col:2006}, where a user referees a nonlocal game repeatedly with $2$
or more untrusted players (or, equivalently, $2$ or more untrusted quantum devices)
to expand a small uniformly random seed $S$
into a large output string $T$ that is uniform conditioned on $S$.  The players can exhibit arbitrary
quantum behavior, but it is assumed that they are prevented from communicating
with the adversary.  At the center of some of the
discussions of randomness expansion (e.g., \cite{pam:2010}) is the
fact that the min-entropy of the outputs of the players can be lower bounded
by an increasing function of the score achieved at the game.

In this paper we have proven an analogous result for the case where
one player in a game wishes to generate randomness that is unknown
to the other player --- in other words, we have achieved (one-shot) \textit{blind}
randomness expansion.  (The second party, Bob, is ``blind'' to the randomness
generated by Alice.)  We have also proven a general rate curve for any game $G$,
which relates the score achieve at $G$ to the predictability of Alice's output from the perspective of Bob --
specifically, if $G$ is a complete support game and Alice and Bob achieve score $w$, then
Bob's probability of guessing her output given her input is at most
\begin{eqnarray*}
f_G ( w ) & = & \left\{ \begin{array}{cl}
1 -( w - \omega_c ( G ) )^2/C_G & \textnormal{ if } w \geq \omega_c ( G ) \\
1 & \textnormal{ otherwise,} \end{array} \right.
\end{eqnarray*}
where $C_G$ denotes the constant defined in equation (\ref{cg}).  

A possible next step would be to prove a multi-shot version
of Theorem~\ref{robustthm}, e.g., a proof that Alice's outputs across multiple rounds
have high smooth min-entropy from Bob's perspective.  With the use of a
quantum-proof randomness extractor (e.g., \cite{De:2012}) this would imply that
Alice has the ability to generate uniformly random bits, known only to her, through interactions with Bob.
In the device-independent setting, this would mean that one device could be be reused
in multiple iterations  of randomness expansion without affecting the security guarantee,
and in particular would decrease the minimum number of quantum devices needed to perform unbounded
randomness expansion from four (as in \cite{Miller:2016, CSW14}) down to three.

The recent entropy accumulation theorem
\cite{Dupuis:2016} proves lower bounds on smooth min-entropy in various scenarios where
a Bell inequality is violated.  It will be interesting
to see if it can be generalized to cover blind randomness expansion as well.
(The current results apply under a Markov assumption 
 which is not satisfied in our case.)

A corollary of our result is that, for any complete-support game $G$,
the range of scores that certify randomness against a third party are exactly the same
as the range of scores that certify randomness for one player against the second --- in both cases,
any superclassical score is adequate.  We point out, however, that the certified min-entropy can be different.
A simple example of this is the Magic Square game, where Alice and Bob are given inputs $a, b \in \{ 1, 2, 3 \}$
respectively, and must produce outputs $(x_1, x_2, x_3), (y_1, y_2, y_3) \in \{ 0, 1 \}^3$ respectively which satisfy
\begin{eqnarray}
x_1 \oplus x_2 \oplus x_3 & = & 0, \\
y_1 \oplus y_2 \oplus y_3 & = & 1, \\
x_b & = & y_a.
\end{eqnarray}
Self-testing \cite{WuBancal:2016_p} for the Magic Square
game implies that if Alice and Bob achieve a perfect score, Alice's output contains two bits of perfect randomness
from the perspective of a third party, but only one perfect bit of randomness from the perspective of Bob. 
Optimizing the relationship between the game score and min-entropy in the blind scenario is an open problem.

A potentially useful aspect of Corollary~\ref{maincor} is that it contains
a notion of \textit{certified erasure} of information.  For the example of the Magic Square game
mentioned above, if Bob were asked before his turn to guess Alice's output
given her input, he could do this perfectly.  (The optimal strategy for the Magic Square
game uses a maximally entangled state and projective measurements, so each party's measurement
outcomes can be perfectly guessed by the other player.)
Contrary to this, when Bob is compelled to carry out his part of the
strategy before Alice's input is revealed, he loses the ability
to perfectly guess Alice's output.  Requiring a superclassical score from Alice and Bob
amounts to forcing Bob to erase information.  Different
variants of certified erasure are a topic of current study
\cite{Unruh:2015, Kaniewski:2016, Rib:2016}.
An interesting
research avenue is to determine the minimal assumptions under
which certified erasure is possible.

Finally, we note that the scenario in which the second player
tries to guess the first player's output after computing his own
output fits the general framework of \textit{sequential
nonlocal correlations} \cite{Gallego:2014}.  \commentout{ {\color{red} [Cite 
Teufel 1997?]} }
In \cite{Curchod:2015} such correlations
are used for ordinary (non-blind) randomness expansion.
A next step is to explore
how our techniques could be applied to more general
sequential nonlocal games.

\paragraph{Acknowledgments.} 
We are indebted to Laura Mancinska for discussions that helped
us to prove our robust result (Theorem~\ref{robustthm}).
The first author also thanks Jedrzej Kaniewski, Marcin Pawlowski and Stefano Pironio  for
helpful information. This research was supported in part by US NSF
Awards 1500095, 1216729, 1526928, and 1318070.

\bibliographystyle{apsrev}
\bibliography{quantumsec}

\end{document}